\documentclass[12pt,technote, onecolumn, draft]{IEEEtran}
\usepackage[utf8]{inputenc}
\usepackage{latexsym}
\usepackage{mathrsfs}
\usepackage{graphicx}
\usepackage{multirow}
\usepackage{amsfonts,amssymb,amsmath,amsthm,bm}
\usepackage{color}
\usepackage{booktabs}
\usepackage{cite}
\newtheorem{theorem}{Theorem}[section] 
\newtheorem{definition}[theorem]{Definition} 
\newtheorem{lemma}[theorem]{Lemma} 
\newtheorem{corollary}[theorem]{Corollary}
\newtheorem{example}[theorem]{Example}
\newtheorem{proposition}[theorem]{Proposition}
\newtheorem{remark}[theorem]{Remark}

\begin{document}
	\title{A class of pseudorandom sequences From Function Fields $^{\dag}$}
	\author{Xiaofeng Liu, Jun Zhang, Fang-Wei Fu
		\IEEEcompsocitemizethanks{\IEEEcompsocthanksitem Xiaofeng Liu and Fang-Wei Fu are with Chern Institute of Mathematics and LPMC, Nankai University, Tianjin 300071, China, Emails: lxfhah@mail.nankai.edu.cn, fwfu@nankai.edu.cn. Jun Zhang are with School of Mathematical Sciences, Capital Normal University, Beijing 100048, China, Email: junz@cnu.edu.cn
		}
		\thanks{$^\dag$This research is supported by the National Key Research and Development Program of China (Grant Nos.  2022YFA1005000), the National Natural Science Foundation of China (Grant Nos. 12141108, 61971243, 12226336), the Natural Science Foundation of Tianjin (20JCZDJC00610), the Fundamental Research Funds for the Central Universities of China (Nankai University), and the Nankai Zhide Foundation.}
		\thanks{}
	}{\tiny }
		\maketitle

	\begin{abstract}
		   Motivated by the constructions of pseudorandom sequences over the cyclic elliptic function fields by Hu \textit{et al.} in \text{[IEEE Trans.  Inf. Theory, 53(7), 2007]} and the constructions of low-correlation, large linear span binary sequences from function fields by Xing \textit{et al.} in \text{[IEEE Trans. Inf. Theory, 49(6), 2003]}, we utilize the bound derived by Weil \text{[Basic Number Theory, Grund. der Math. Wiss., 
           Bd 144]} and Deligne \text{[ Lecture Notes in Mathematics, vol. 569 (Springer, Berlin, 1977)]} for the exponential sums over the general algebraic function fields and study the periods, linear complexities, linear complexity profiles, distributions of $r-$patterns, period correlation and nonlinear complexities for a class of $p-$ary sequences that generalize the constructions in \text{[IEEE Trans. Inf. Theory, 49(6), 2003]} and [IEEE Trans.  Inf. Theory, 53(7), 2007].
           \end{abstract}

	\begin{IEEEkeywords}
		 Exponential sums over the algebraic function fields, automorphism groups, Cyclic elliptic function fields, Riemann-Roch theorem, Rational function fields, Artin-Schreier $\mathbb{F}_{q}-$covering.
	\end{IEEEkeywords}
	
	\section{Introduction}
	\label{sec:1}
The construction of $p$-ary sequences that simultaneously exhibit low correlation and high linear as well as nonlinear complexity has been widely applied in code division multiple access (CDMA), digital communication systems, word-based stream ciphers, broadband satellite communications, \textit{etc.}
  
We now review several approaches to constructing pseudorandom sequences using the framework of function fields.  
A standard technique relies on the cyclic structure of finite fields. Let $\mathbb{F}_{q}$ be a finite field with $q$ elements and let $\alpha$ be a primitive element of $\mathbb{F}_{q}$, where $q=p^{e}$ with $e\ge 1$ and $p$ prime. For any polynomial $f(x)\in\mathbb{F}_{q}[x]$, define the sequence $S_{f}$ by
\[
    S_{f}=\{\mathrm{Tr}(f(\alpha^{t}))\}_{t=0}^{\infty},
\]
where $\mathrm{Tr}$ is the absolute trace from $\mathbb{F}_{q}$ to $\mathbb{F}_{p}$. Set
\[
    \tilde{S}=\{S_{f}\mid f(x)\in\mathbb{F}_{q}[x],\ \deg(f(x))\le d\}.
\]
Throughout this paper, for each positive integer $\ell$, we denote by $\omega_{\ell}$ the complex primitive $\ell$-th root of unity $e^{2\pi i/\ell}$. For any $S_{f},S_{g}\in\tilde{S}$, the periodic correlation at shift $\tau$ is defined as
\[
    C_{f,g}=\sum_{t=0}^{q-2}\omega_{p}^{\mathrm{Tr}(f(\alpha^{t+\tau})-g(\alpha^{t}))}.
\]
By the Weil–Carlitz–Uchiyama bound (see \cite{20,1}, among others),
\[
    |C_{f,g}|\le (d-1)\sqrt{q}+1
\]
whenever $f\neq g$ or $\tau\neq 0$. This method of construction can be generalized from finite fields to arbitrary function fields.

A family of binary sequences with length $2^{n}+1$ ($p^{n}+1$), cardinality $2^{n}-1$ ($p^{n}-1$), and correlation bounded by $2^{(n+2)/2}$ ($4+\lfloor2 p^{\frac{n}{2}}\rfloor$) was constructed via cyclotomic function fields (see \cite{2,9}). Using the framework of Artin–Schreier extensions of function fields, researchers subsequently derived sequences that combine low correlation with large linear complexity. In particular, Xing \textit{et al.} proposed a general construction over function fields of even characteristic; see \cite{5}. In \cite{10}, Hu \textit{et al.} gave another construction based on Kummer extensions of function fields. Moreover, Luo \textit{et al.} \cite{20} developed sequences with high nonlinear complexity obtained from rational function fields and cyclotomic function fields. Later, Hu \textit{et al.} \cite{22} introduced a further family of $p$-ary pseudorandom sequences arising from elliptic curves over finite fields and investigated their periods, linear complexities, aperiodic correlations, and other properties. Although many such constructions of binary sequences with good parameters are known, the achievable sequence periods over any fixed finite field are still quite limited. More recently, Jin \textit{et al.} \cite{3} and Liu \textit{et al.} \cite{21} independently constructed two families of binary sequences by exploiting the cyclic structure of rational points on elliptic curves over finite fields of even and odd characteristic, respectively.

In this paper, we use exponential sums over algebraic function fields, together with results of Deligne \cite{32}, Weil \cite{31}, and Bombieri \cite{33}, to derive properties of the periods, linear complexity, and periodic correlation of a family of $p$-ary sequences constructed from function fields.

The remainder of the paper is organized as follows. In Section \ref*{sec:2}, we recall basic notions on sequences, extension theory of function fields, exponential sums over function fields, and cyclic elliptic function fields, which will be needed throughout the paper. Section \ref*{sec:3} presents a theoretical framework for constructing $p$-ary sequences with low correlation and large linear complexity via general function fields over finite fields. In Sections \ref*{sec:4} and \ref*{sec:5}, we give two explicit constructions, one based on rational function fields and the other on cyclic elliptic function fields, respectively.
   \section{Preliminaries }
	\label{sec:2}
	In this section, we list some preliminaries on the results of exponential sums over function fields and cyclic elliptic function fields.

\subsection{The linear complexity and correlation}
Linear complexity and correlation are two fundamental metrics for assessing the pseudo-random properties of sequences.
\begin{definition}
    The linear complexity of a periodic sequence $S=\{s_i\}_{i=0}^{\infty}$ (or of a finite sequence $S=\{s_i\}_{i=0}^{N-1}$) over $\mathbb{F}_p$ is defined as the smallest nonnegative integer $\mathrm{L}$ such that there exist coefficients $\lambda_0,\lambda_1,\dots,\lambda_{\mathrm{L}}\in\mathbb{F}_p$ with $\lambda_0=1$ and $\lambda_{\mathrm{L}}\neq 0$ satisfying
\[
\lambda_0 s_{i+\mathrm{L}}+\lambda_1 s_{i+\mathrm{L}-1}+\cdots+\lambda_{\mathrm{L}} s_i = 0
\]
for every $i\ge 0$ (or for all $0\le i\le N-\mathrm{L}$ in the finite case). The linear complexity of $S$ is denoted by $\mathrm{LC}(S)$.
The researches of the deviation of $\mathrm{L}_{n}(S)$ from $N/2$ is also considered
\end{definition}
For $1\leq n\leq N-1$, let $\mathrm{LC}_{n}(S)$ represent the linear complexity of the truncated sequence $S_{n}=\{s_{i}\}_{i=0}^{n-1}$. The sequence $\{\mathrm{LC}_{n}(S)\}_{n=1}^{N-1}$ is referred to as the linear complexity profile of $S$.

Now we introduce the definition of the $d-$perfect for a sequence, see \cite{4}.
\begin{definition}
    If $d$ is a positive integer, then a sequence $S$ of elements of $\mathbb{F}_{q}$ is called $d-$perfect if
    \begin{displaymath}
        |\mathrm{LC}_{n}(S)-n|\leq d\ \ \text{for all}\ n\geq 1.
    \end{displaymath}
    A $1-$perfect sequence is also called perfect. A sequence is called almost perfect if it is $d-$perfect for some $d$.
\end{definition}

\begin{definition}
   Let $S_{1}=\{s_{i,1}\}_{i=0}^{\infty}$ and $S_{2}=\{s_{i,2}\}_{i=0}^{\infty}$ be two sequences over $\mathbb{F}_{p}$, each with period $T$. The periodic correlation $C_{S_{1},S_{2}}(\tau)$ at shift $\tau$ is given by
\begin{displaymath}
    C_{S_{1},S_{2}}(\tau)=\sum_{i=0}^{T-1}\omega_{p}^{\,s_{1,i+\tau}-s_{2,i}}.
\end{displaymath}
where $\omega_{p}$ is denoted by the $p$-th root of the unity $e^{2\pi i/p}$.
    \end{definition}

\subsection{Nonlinear Complexity}
 We denote the polynomial ring formed by $L$ variables over finite field by $\mathbb{F}_{q}$ $\mathbb{F}_{q}[x_{1},\ldots,x_{\mathrm{L}}]$. For a monomial $x_{1}^{e_{1}}\cdots x_{\mathrm{L}}^{e_{\mathrm{L}}}$, we denote by $I =(e_{1},\cdots,e_{\mathrm{L}}) \in \mathbb{Z}^L_{\geq 0}$, with its total degree calculated as $\mathrm{wt}(I) = \sum_{i=1}^{L} e_{i}$. Any element in $\mathbb{F}_{q}[x_{1},...,x_{\mathrm{L}}] $ with a total degree not exceeding $m$ can be expressed as
$\sum_{\mathrm{wt}(I) \leq m}a_Ix_{1}^{e_{1}}\cdots x_{\mathrm{L}}^{e_{\mathrm{L}}},$
where all coefficients $a_{I}$ are elements of $\mathbb{F}_{q}$.

For $\mathbf{s}=(s_{0},s_{1},\cdots,s_{N-1})\neq \mathbf{0}$ defined over $\mathbb{F}_{q}$, the multivariate polynomial $\Phi(x_{1},\cdots,x_{\mathrm{L}})$ generates $\mathbf{s}$ if $s_{i+\mathrm{L}}=\Phi(s_{i},s_{i+1},\cdots,s_{i+\mathrm{L-1}})$ for $0\leq i\leq N-1-\mathrm{L}$.
Fix an integer $m$, we denote the $m$th-order nonlinearity by $\mathrm{NL}_{m}(\mathbf{s})$ which associates a minimal positive integer $\mathrm{L}$ where a multivariate polynomial $\Phi(x_{1},\cdots,x_{\mathrm{L}})$ with total degree $\leq m $ can be found to generate the sequence $\mathbf{s}$. Particularly, if each component of $\mathbf{s}$ is zero , $\mathrm{NL}_{m}(\mathbf{s})$ is defined as zero. For $1\leq n\leq N$, $\pi_{n}(\mathbf{s})$ is used to denote the truncated sequence $(s_{0}, s_{1}, \cdots, s_{n-1})$ which is derived from $\mathbf{s}$ and the $m$th-order nonlinear complexity of this sequence is denoted by $\mathrm{NL}_{m}(\mathbf{s},n)$.

\subsection{Function fields and exponential sums}
Let $F/\mathbb{F}_{q}$ be an algebraic function field of genus $g(F)$ over the constant field $\mathbb{F}_{q}$. A place is defined as a maximal ideal in a valuation ring, and a divisor is a finite formal linear combination of such places. We write $\mathbb{P}_{F}$ for the set of all places of $F$ and $\mathbb{D}_{F}$ for the set of all divisors on $F$.

For a place $P$ of $F$, let $O_{P}$ be the associated valuation ring, and define the residue class field $F_{P}=O_{P}/P$. The degree of $P$ is given by $[F_{P}:\mathbb{F}_{q}]$. Let $P^{1}(F)$ denote the set of $\mathbb{F}_{q}$-rational places of $F$ and put $N(F)=|P^{1}(F)|$. The classical Hasse–Weil bound, sharpened by Serre, states that
\[
   |N(F)-q-1|\leq g(F)\cdot\lfloor2\sqrt{q}\rfloor,
\]
Here, $\lfloor x\rfloor$ denotes the integer part of $x\in\mathbb{R}$. This estimate is commonly used, for example, to obtain bounds on correlation measures and linear complexity in \cite{3,21,5}, among other works.

Let $v_{P}$ be a normalized discrete valuation of $F$. For a divisor $G=\sum_{P\in\mathbb{P}_{F}}m_{P}P\in\mathbb{D}_{F}$, we write $v_{P}(G)=m_{P}$, and its support is
\[
   \mathrm{Supp}(G)=\{P\in\mathbb{P}_{F}\mid v_{P}(G)\neq 0\}.
\]
The degree of $G$ is defined by
\[
   \deg(G)=\sum_{P\in\mathbb{P}_{F}}v_{P}(G)\deg(P).
\]
A divisor $G=\sum_{P\in\mathbb{P}_{F}}v_{P}(G)P$ is called effective if $v_{P}(G)\geq 0$ for every $P\in\mathrm{Supp}(G)$.

For any rational function $f\in F$, write $(f)$, $(f)_{0}$, and $(f)_{\infty}$ for the principal, zero, and pole divisors of $f$, respectively. Given a nonnegative divisor $G$, the associated Riemann–Roch space is
\[
   \mathcal{L}(G):=\{f\in F\setminus\{0\}\mid(f)+G\geq 0\}\cup\{0\},
\]
which is a finite-dimensional $\mathbb{F}_{q}$-vector space. We denote its dimension by $\ell(G)$. If $\deg(G)\geq 2g(F)-1$, then the Riemann–Roch theorem yields
\[
   \ell(G)=\deg(G)+1-g(F).
\]

An element $z \in F$ is called degenerate if there exist some $\alpha \in \mathbb{F}_q$ and $h \in F$ such that $z = \alpha + h^{p} - h$; if no such representation exists, then $z$ is called non-degenerate. The next lemma gives a characterization of when an element is non-degenerate.

\begin{lemma}[{\cite[Lemma 2.2]{26}}]
   If there is a place $Q$ of $F$ for which $v_{Q}(z)<0$ and this integer is relatively prime to $p$, then $z$ is non-degenerate.
    \label{88}
\end{lemma}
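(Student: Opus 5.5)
We argue by contradiction. Suppose $z$ is degenerate, so that
\[
z=\alpha+h^{p}-h
\]
for some $\alpha\in\mathbb{F}_q$ and $h\in F$. We then compute $v_Q$ of the right-hand side at the given place $Q$ and show that it is always either nonnegative or divisible by $p$. This contradicts the hypothesis that $v_Q(z)<0$ with $\gcd(v_Q(z),p)=1$.

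The first step is to note that $v_Q(\alpha)\ge 0$, since $\alpha$ is a constant: it is either $0$ (valuation $+\infty$) or a nonzero constant (valuation $0$). The second step is a case split on $v_Q(h)$.

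\emph{Case 1: $v_Q(h)\ge 0$.} Then $v_Q(h^p)=p\,v_Q(h)\ge 0$, and the strict-inequality form of the triangle inequality gives
\[
v_Q(h^p-h)\ge \min\{p\,v_Q(h),v_Q(h)\}\ge 0.
\]
Hence $v_Q(\alpha+h^p-h)\ge 0$, contradicting $v_Q(z)<0$.

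\emph{Case 2: $v_Q(h)=-m<0$.} Then $v_Q(h^p)=-pm<-m=v_Q(h)$, so the strict triangle inequality gives $v_Q(h^p-h)=-pm$. Since $v_Q(\alpha)\ge 0>-pm$, the same principle gives
\[
v_Q(z)=v_Q(\alpha+h^p-h)=-pm.
\]
This value is divisible by $p$, contradicting $\gcd(v_Q(z),p)=1$.

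Together the two cases show that no such representation of $z$ exists, so $z$ is non-degenerate.

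There is no real obstacle in this argument. The only point needing care is to invoke the strict triangle inequality $v(a+b)=\min\{v(a),v(b)\}$ when $v(a)\ne v(b)$. In Case 2 it is applied twice, first to $h^p$ and $-h$ and then to that sum and $\alpha$. It also matters that $p\,v_Q(h)\neq v_Q(h)$ precisely because $v_Q(h)\neq 0$ there.
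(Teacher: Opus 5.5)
Your proof is correct. The paper gives no proof of its own and simply quotes the result from \cite[Lemma 2.2]{26}, and your valuation computation is the standard argument for it: $v_Q(\alpha+h^p-h)$ is either nonnegative or equal to $p\,v_Q(h)$ with $v_Q(h)<0$. One small wording fix: in Case 1 you use the ordinary ultrametric inequality $v(a+b)\ge\min\{v(a),v(b)\}$, not its strict form.
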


Let $\mathrm{Aut}(F/\mathbb{F}_{q})$ denote the automorphism group of $F$ over $\mathbb{F}_{q}$, that is,
\[
    \mathrm{Aut}(F/\mathbb{F}_{q})=\{\sigma : F\to F \mid \sigma\ \text{is an $\mathbb{F}_{q}$-automorphism of}\ F\}.
\]
In what follows, we investigate the natural action of $\mathrm{Aut}(F/\mathbb{F}_{q})$ on the collection of places of $F$. Given a place $P$ of $F$ and an automorphism $\sigma \in \mathrm{Aut}(F/\mathbb{F}_{q})$, the image $\sigma(P)$ is itself a place of $F$. This observation yields the following results.
\begin{lemma}(see \cite{7})
    Let $\sigma\in \mathrm{Aut}(F/\mathbb{F}_{q})$ be an automorphism, $P\in\mathbb{P}_{F}$ a place, and $f\in F$ a function. Then:
    \begin{enumerate}
        \item $\deg(\sigma(P))=\deg(P)$;
        \item $v_{\sigma(P)}(\sigma(f))=v_{P}(f)$;
        \item if $v_{P}(f)\geq 0$, then $\sigma(f)(\sigma(P))=f(P)$.
    \end{enumerate}
    \label{1}
    \end{lemma}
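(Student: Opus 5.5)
The plan is to use the fact that $\sigma$ is an $\mathbb{F}_{q}$-automorphism of $F$, so it maps valuation rings to valuation rings. Writing $O_{P}$ for the valuation ring of $P$, the natural candidate for the image place is
\[
\sigma(P)=\{\sigma(x)\mid x\in P\},
\]
with valuation ring $\sigma(O_{P})$. Since $\sigma$ is a field automorphism fixing $\mathbb{F}_{q}$, $\sigma(O_{P})$ is a valuation ring of $F/\mathbb{F}_{q}$: for any $z\in F^{*}$, either $\sigma^{-1}(z)\in O_{P}$ or $\sigma^{-1}(z)^{-1}\in O_{P}$, and applying $\sigma$ gives the same dichotomy for $z$. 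Its unique maximal ideal is $\sigma(P)$. This justifies calling $\sigma(P)$ a place, and the three items then follow in the order (2), (1), (3).

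For (2), pick a prime element $t$ of $P$. Then $\sigma(t)$ generates the maximal ideal $\sigma(P)$ of $\sigma(O_{P})$, so it is a prime element of $\sigma(P)$. Write any $f\neq0$ as $f=t^{n}u$ with $u\in O_{P}^{*}$ and $n=v_{P}(f)$. Applying $\sigma$ gives
\[
\sigma(f)=\sigma(t)^{n}\sigma(u),
\]
and $\sigma(u)$ is a unit of $\sigma(O_{P})$. Hence $v_{\sigma(P)}(\sigma(f))=n$. The case $f=0$ is trivial.

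For (1) and (3), $\sigma$ restricts to a ring isomorphism $O_{P}\to\sigma(O_{P})$ carrying $P$ onto $\sigma(P)$. It therefore induces an isomorphism of residue fields
\[
\bar\sigma:F_{P}=O_{P}/P\to F_{\sigma(P)},\qquad x+P\mapsto\sigma(x)+\sigma(P).
\]
Because $\sigma$ fixes $\mathbb{F}_{q}$ pointwise, $\bar\sigma$ is $\mathbb{F}_{q}$-linear. The two residue fields are then isomorphic as $\mathbb{F}_{q}$-vector spaces, so $\deg\sigma(P)=\deg P$, which is (1).

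For (3), the value $f(P)$ is the residue class $f+P$, and $\sigma(f)(\sigma(P))$ is $\sigma(f)+\sigma(P)$, which is exactly $\bar\sigma(f(P))$. Saying these are equal means equality after the canonical identification of $F_{P}$ with $F_{\sigma(P)}$ via $\bar\sigma$. In the main application $P$ is rational, so $\bar\sigma$ is the identity on $\mathbb{F}_{q}$ and we get literal equality $\sigma(f)(\sigma(P))=f(P)\in\mathbb{F}_{q}$. That rational case is the only one needed for the sequence constructions.

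The only delicate point is this interpretation in (3) when $\deg P>1$: the equality holds only up to the identification by $\bar\sigma$. I will state that identification explicitly rather than claim literal equality of elements of different residue fields. Everything else is routine transport of structure.
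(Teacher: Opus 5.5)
Your proof is correct. Transporting the valuation ring $O_P$, a prime element, and the residue field along $\sigma$ gives (2), (1) and (3) exactly as you describe; this is the standard argument, and the paper offers no proof of its own, only the citation to \cite{7}. Your caveat that (3) is a literal equality only for rational $P$, and otherwise holds via the induced isomorphism $\bar\sigma\colon F_P\to F_{\sigma(P)}$, is accurate and costs nothing, since the paper only evaluates at the rational places $P_i$. The paper also implicitly uses that $P\mapsto\sigma(P)$ is a bijection of $\mathbb{P}_F$, with inverse induced by $\sigma^{-1}$; together with your (2), this gives that the poles of $\sigma(z)$ are exactly the $\sigma$-images of the poles of $z$.
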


  For $f \in F$, define
\begin{displaymath}
    \begin{cases}
        m_{P}(f)=0 & \text{if } f\in\mathcal{O}_{P},\\[4pt]
        m_{P}(f)=\text{the order of the pole of } f \text{ at } P & \text{if } f\notin\mathcal{O}_{P},\\[4pt]
        m_{P}^{*}(f)=\min\{\,m_{P}(f-g)\mid g\in\mathscr{P}(F)\,\},
    \end{cases}
\end{displaymath}
where the map $\mathscr{P}$ is defined by $\mathscr{P}(x)=x - x^{p}$ for all $x\in F$, and $m_{P}^{*}$ is considered as a function on the quotient space $F/\mathscr{P}(F)$.

Evidently, $m_{P}^{*}(f)=0$ holds precisely when $f$ is degenerate. In accordance with Artin’s terminology, we refer to $m_{P}^{*}(f)$ as the reduced order of the pole of $f$ at the place $P$.

The following theorem on exponential sums over function fields, first proved by Weil, Deligne, and Bombieri, is highly valuable for deriving bounds on linear complexity and correlation.
\begin{theorem}[see \cite{31,32,33}]
    For any non-degenerate $f\in F$, let $A=P^1(F)\setminus\mathrm{Supp}((f)_{\infty})$, then
    \begin{displaymath}
        \left|\sum_{P\in A}\omega_{p}^{\mathrm{Tr}\left(f(P)\right)}\right|\leq\\
        \left(2g(F)-2+\sum_{u\in\mathrm{Supp}((f)_{\infty})}(m^{*}_{u}(f)+1)\deg(u)\right)\sqrt{q}.
         \label{2}    \end{displaymath}
    \end{theorem}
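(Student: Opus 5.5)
The plan is to reduce the estimate to the Riemann hypothesis for the Artin–Schreier extension defined by $f$, and then compute its genus with the Riemann–Hurwitz (Hasse) formula.

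\emph{Reduction.} Since $f$ is non-degenerate, $y^{p}-y=f$ defines, for each $c\in\mathbb{F}_{p}^{*}$, a cyclic extension of degree $p$ with full constant field $\mathbb{F}_{q}$. The same holds for $y^p-y=f'$ with any $f'=f+h^p-h$. We therefore work with $E=F(y)$, where $y^{p}-y=f$. Reading $\mathrm{Tr}$ as $\mathrm{Tr}_{\mathbb{F}_q/\mathbb{F}_p}$, a rational place $P\in A$ splits completely in $E$ exactly when $\mathrm{Tr}(f(P))=0$, and it is inert otherwise. This follows from Hilbert's Theorem 90 applied to $y^p-y=f(P)$ over $\mathbb{F}_q$. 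The orthogonality relation $\sum_{c\in\mathbb{F}_p}\omega_p^{c\,\mathrm{Tr}(a)}=p\cdot[\mathrm{Tr}(a)=0]$ then gives
\[
N(E)-N(F)=\sum_{c\in\mathbb{F}_p^{*}}\sum_{P\in A}\omega_p^{\mathrm{Tr}(cf(P))}+(\text{ramified contribution}).
\]
To isolate the single sum for $c=1$, it is cleaner to use L-functions. The zeta function factors as $Z_E(t)=Z_F(t)\prod_{\chi\neq 1}L(t,\chi)$ over the nontrivial characters $\chi$ of $\mathrm{Gal}(E/F)\cong\mathbb{F}_p$, and the character $\chi_1$ corresponding to $c=1$ satisfies
\[
-\sum_{P\in A}\omega_p^{\mathrm{Tr}(f(P))}=\text{coefficient of }t\text{ in }\log L(t,\chi_1),
\]
because the places in $\mathrm{Supp}((f)_\infty)$, after reducing $f$ so that each pole order is $m_P^*(f)$ prime to $p$ (possible by the definition of $m^*_P$), are totally ramified and contribute nothing to the Euler factor.

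\emph{Degree of the L-function.} By the Weil/Deligne Riemann hypothesis, $L(t,\chi_1)$ is a polynomial of degree $2g(F)-2+\deg\mathfrak{f}(\chi_1)$ whose reciprocal roots all have absolute value $\sqrt q$. Here $\mathfrak f$ is the conductor. The coefficient of $t$ in $\log L$ is minus the sum of the reciprocal roots, so the sum is bounded by $\deg L\cdot\sqrt q$. It therefore remains to show
\[
\deg L(t,\chi_1)=2g(F)-2+\sum_{u}(m_u^*(f)+1)\deg u.
\]
The cleanest route is to compute the genus of $E$. The Hasse formula for Artin–Schreier extensions (Stichtenoth, Prop.~3.7.8) gives
\[
2g(E)-2=p(2g(F)-2)+(p-1)\sum_u(m_u^*(f)+1)\deg u.
\]
The $p-1$ nontrivial characters are Galois conjugate, and the corresponding L-functions all have the same degree. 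Hence
\[
(p-1)\deg L=2g(E)-2g(F)=(p-1)\Bigl(2g(F)-2+\sum_u(m_u^*+1)\deg u\Bigr).
\]

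\emph{Main obstacle.} The delicate step is the bookkeeping at the poles. Two things must be checked: that replacing $f$ by a representative attaining $m_u^*(f)$ does not change the character sum over $A$, and that such a replacement does not introduce new poles outside $\mathrm{Supp}((f)_\infty)$. The first holds because $\mathrm{Tr}(h(P)^p-h(P))=0$ wherever $h$ is regular. For the second, I would use that $m_u^*(f)$ is computed locally at $u$ and that the reduction can be done by strong approximation without adding poles elsewhere; alternatively, any new poles only enlarge the right-hand side. I would rely on Lemma~\ref{88} only to guarantee $m_u^*(f)>0$ for some $u$, which makes the extension nontrivial and geometric.
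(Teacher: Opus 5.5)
The paper gives no proof of this statement; it is quoted from Weil, Deligne and Bombieri. Your route is the standard argument behind those citations: the Artin--Schreier cover $y^p-y=f$, the factorization of $Z_E/Z_F$ into $L$-functions, the Riemann hypothesis, and $\deg L(t,\chi_1)$ read off from Hasse's genus formula together with Galois conjugacy of the $p-1$ nontrivial characters.

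\textbf{The gap.} The bookkeeping at the poles is wrong as written. You assume every $u\in\mathrm{Supp}((f)_\infty)$ can be reduced to a pole of order $m_u^*(f)$ prime to $p$, and is therefore totally ramified. This fails whenever $m_u^*(f)=0$, and the hypotheses allow that. Non-degeneracy only excludes $f\in\mathbb{F}_q+\mathscr{P}(F)$. Lemma~\ref{88} runs in the opposite direction, so it does not even give you one $u$ with $m_u^*(f)>0$; for $g(F)\ge 1$ the cover may be unramified everywhere.

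A pole $u$ with $m_u^*(f)=0$ is unramified in $E/F$, which breaks two of your steps:
\begin{enumerate}
\item It contributes nothing to Hasse's formula (in Stichtenoth's Prop.~3.7.8 unramified places have $m_P=-1$, so their term vanishes). Hence your identity $\deg L(t,\chi_1)=2g(F)-2+\sum_u(m_u^*(f)+1)\deg u$ is false.
\item Its Euler factor does occur in $L(t,\chi_1)$, so the sum over $A$ is not exactly the $t$-coefficient of $\log L$.
\end{enumerate}
For example, take $F=\mathbb{F}_q(x)$ and $f=x^p-x+1/x$. Then $m_\infty^*(f)=0$ and $m_0^*(f)=1$. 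The field $E=\mathbb{F}_q(y-x)$ is rational, so $L(t,\chi_1)=1$, whereas your formula predicts degree $1$. Also $\sum_{P\in A}\omega_p^{\mathrm{Tr}(f(P))}=\sum_{a\in\mathbb{F}_q^*}\omega_p^{\mathrm{Tr}(1/a)}=-1\neq 0$.

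\textbf{The repair.} It is short, and it explains why the stated bound keeps the term $(m_u^*(f)+1)\deg u$ even when $m_u^*(f)=0$. Write $\mathrm{Supp}((f)_\infty)=R\cup U$ with $R=\{u: m_u^*(f)\ge 1\}$ (exactly the ramified places) and $U=\{u: m_u^*(f)=0\}$.
\begin{enumerate}
\item Your Hasse-plus-conjugacy computation, with the sum restricted to $R$, gives $\deg L(t,\chi_1)=2g(F)-2+\sum_{u\in R}(m_u^*(f)+1)\deg u$.
\item Let $S_1=\sum\chi_1(\mathrm{Frob}_P)$, taken over the unramified rational places. It is the $t$-coefficient of $\log L$ with a plus sign; your sign slip there is harmless. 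Then $\sum_{P\in A}\omega_p^{\mathrm{Tr}(f(P))}=S_1-\sum_{u\in U,\,\deg u=1}\chi_1(\mathrm{Frob}_u)$.
\item Hence the modulus of the sum is at most $\deg L(t,\chi_1)\sqrt q+\#\{u\in U:\deg u=1\}$. Each extra $1$ is absorbed by $(m_u^*(f)+1)\deg u\sqrt q=\deg u\sqrt q\ge 1$.
\end{enumerate}

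\textbf{The obstacle you singled out.} It is not the real one. No simultaneous global reduction of $f$ is needed. Ramification, the Hasse contributions and $\chi_1(\mathrm{Frob}_P)$ are all determined place by place by the local class of $f$ modulo $\mathscr{P}$, and for $P\in A$ directly by $f(P)$. Your fallback that new poles ``only enlarge the right-hand side'' would not give the stated inequality in any case. Passing to $f+h^p-h$ changes both the summation set $A$ and the right-hand side.
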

We now present several results on exponential sums that are used in \cite{22} \textit{etc.}, and we observe that these are in fact special cases of Theorem \ref{2}.
\begin{remark}
\begin{enumerate}
    \item In the case where all poles of $f$ have degree $1$, we obtain the special situation considered by Bombieri in \cite{33}, namely
    \begin{displaymath}
        \left|\sum_{P\in A}\omega_{p}^{\mathrm{Tr}\left(f(P)\right)}\right|\leq
        \left(2g(F)-2+\#\mathrm{Supp}((f)_{\infty})+\deg((f)_{\infty})\right)q^{1/2}.
    \end{displaymath}
    \item If $F$ is an elliptic function field and each pole of $f$ has degree $1$, then we obtain the estimate
    \begin{displaymath}
        \left|\sum_{P\in A}\omega_{p}^{\mathrm{Tr}\left(f(P)\right)}\right|\leq\left(\#\mathrm{Supp}((f)_{\infty})+\deg((f)_{\infty})\right)q^{1/2}.
    \end{displaymath}
    Furthermore, if $f\in F$ has a unique pole $Q$ with $v_{Q}(f)=1$, then
     \begin{displaymath}
        \left|\sum_{P\in A}\omega_{p}^{\mathrm{Tr}\left(f(P)\right)}\right|\leq2\deg((f)_{\infty})q^{1/2}.
    \end{displaymath}
    These bounds were applied by Hu in \cite{22}.
    \end{enumerate}
\end{remark}

   \subsection{Elliptic Function Fields}
An elliptic function field $E/\mathbb{F}_{q}$ is an algebraic function field of genus one defined over $\mathbb{F}_{q}$.

We use the symbol $\oplus$ for the group operation on the set of $\mathbb{F}_{q}$-rational points of the elliptic function field $E/\mathbb{F}_{q}$. We call an elliptic function field cyclic if its rational points form a cyclic group under $\oplus$. The automorphism group of an elliptic curve $E$ over $\mathbb{F}_{q}$ can be written as a semidirect product $T_{E}\rtimes\mathrm{Aut}(E,O)$, where $\mathrm{Aut}(E,O)$ is the group of $\mathbb{F}_{q}$-automorphisms of $E$ that fix the point at infinity $O$, and $T_{E}$ is the subgroup of translations, which can be canonically identified with $E$ itself (see \cite{23}).

Let $T_{E}=\{\sigma_{P} : P\in E(\mathbb{F}_{q})\}$ denote the translation group of $E$, where for each $P$ the automorphism $\sigma_{P}$ is defined by $\sigma_{P}(Q)=P\oplus Q$ for every rational place $Q$. Let $F$ be the subfield of $E$ fixed by the action of $T_{E}$, that is,
\begin{displaymath}
    F=E^{T_{E}}=\{z\in E : \tau(z)=z\ \text{for all}\ \tau\in T_{E}\},
\end{displaymath}
where $\tau(z)(P)=z(\tau^{-1}(P))$ for every $P\in E(\mathbb{F}_{q})$ and $\tau\in T_{E}$.

By the Galois theory, the extension $E/F$ is Galois with Galois group $\mathrm{Gal}(E/F)\simeq T_{E}$. Consequently, there exists a cyclic elliptic function field $E/\mathbb{F}_{q}$ with exactly $q+1+t$ rational places. In this situation, the translation group $T_{E}$ is a cyclic group of order $q+1+t$ generated by $\sigma_{P}$. For each $0\leq j\leq q+t$, define $P_{j}=\sigma_{P}^{j}(O)=[j]P$, which is a rational place of $E$; then the set $\{P,P_{1},\ldots,P_{q+t}\}$ is precisely the set of all rational places of $E$.

\begin{lemma}
     Let $q=p^{n}$. Let $t$ be an integer satisfying one of the following conditions:
     \begin{enumerate}
         \item $|t|\leq 2\sqrt{q}$ and $\gcd(t,p)=1$;
         \item $t=0$, $n$ is odd or $n$ is even with $q\not\equiv -1 \mod 4$;
          \item $t=\pm\sqrt{q}$ if $n$ is even and $p\not\equiv 1 \mod 3$; $t=\pm p^{(n+1)/2}$ if $n$ is odd and $p=2,3$.
          \end{enumerate}
Then there exists a cyclic elliptic function field $E/\mathbb{F}_{q}$ with $q+1+t$ rational places.
          \label{222}
 \end{lemma}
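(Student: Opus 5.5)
The plan is to combine two classical ingredients: Waterhouse's theorem on which isogeny classes occur for elliptic curves over $\mathbb{F}_{q}$, and Rück's (equivalently Voloch's) description of the possible group structures $E(\mathbb{F}_{q})$ within a given isogeny class. The three conditions in Lemma \ref{222} are exactly Waterhouse's list of admissible traces. Writing $N=q+1+t$, we have $\#E(\mathbb{F}_{q})=q+1-a$ with $a=-t$, and the admissible $a$ are:
\begin{itemize}
\item $|a|\le 2\sqrt{q}$ with $\gcd(a,p)=1$ (ordinary curves);
\item $a=0$ when $n$ is odd, or when $n$ is even and $q\not\equiv 1 \pmod 4$;
\item $a=\pm\sqrt{q}$ when $n$ is even and $p\not\equiv 1\pmod 3$;
\item $a=\pm p^{(n+1)/2}$ when $n$ is odd and $p=2,3$.
\end{itemize}
The symmetry $t\mapsto -t$ leaves every condition invariant, so case (1), case (2) as stated (modulo the congruence convention on $q$) and case (3) all guarantee that an elliptic curve with exactly $N$ rational points exists.

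\textbf{Step 1.} Invoke Waterhouse's theorem as above to obtain an isogeny class over $\mathbb{F}_{q}$ whose curves have $N$ rational points.

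\textbf{Step 2.} Show that some curve in this class has cyclic group $E(\mathbb{F}_{q})$, using Rück's theorem. For admissible $N$, the groups $\mathbb{Z}/n_{1}\times\mathbb{Z}/n_{2}$ with $n_{1}n_{2}=N$ and $n_{2}\mid n_{1}$ that occur are exactly those satisfying the following:
\begin{itemize}
\item in the ordinary case and in the cases $a=\pm p^{(n+1)/2}$ or $a=\pm\sqrt{q}$: $n_{2}\mid q-1$;
\item in the case $a=0$: $n_{2}\mid q-1$, except that $\mathbb{Z}/(N/2)\times\mathbb{Z}/2$ may be forced, with the cyclic group still allowed.
\end{itemize}
In every case $n_{2}=1$ meets the divisibility constraints, so $\mathbb{Z}/N$ is realized.

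\textbf{Step 3.} Handle the supersingular case $a=0$ separately, since it is the delicate one. For $q\equiv 3\pmod 4$, Rück's theorem permits both $\mathbb{Z}/(q+1)$ and $\mathbb{Z}/2\times\mathbb{Z}/((q+1)/2)$. For $n$ even with $a=0$, one needs to check that Rück's extra condition does not exclude the cyclic group. If the citation is in doubt, I would construct a cyclic example directly. For odd $q$, take $y^{2}=x^{3}+x$ over $\mathbb{F}_{p}$ with $p\equiv 3\pmod 4$, or a suitable twist, and verify that it has only one rational point of order $2$, which makes the $2$-part cyclic.

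\textbf{Step 4.} Translate back to function fields. The function field $E=\mathbb{F}_{q}(x,y)$ of the curve has exactly $N$ rational places, and these correspond to the points of $E(\mathbb{F}_{q})$, so the group of rational places under $\oplus$ is cyclic.

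The main obstacle is Step 2, that is, correctly quoting Rück's classification in the supersingular subcases, where the $2$-torsion and $3$-torsion behaviour imposes extra constraints. The ordinary case follows immediately from $n_{2}=1$, so I would spend the effort checking case (2) and case (3) against Rück's list.
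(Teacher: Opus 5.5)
\textbf{There is a genuine gap in case (2), and it cannot be closed because that case of the statement is false.} The paper gives no proof of this lemma; it quotes it from the literature. Your route (Waterhouse for existence, then Rück/Voloch to realize $\mathbb{Z}/N$, since $n_2=1$ is allowed except when $a=\pm2\sqrt q$) is the standard one. However, your Step 1 fails for case (2), and the phrase ``modulo the congruence convention on $q$'' hides the failure.

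\begin{itemize}
\item Waterhouse's condition for $a=0$ is: $n$ odd, or $n$ even and $p\not\equiv 1\pmod 4$.
\item You transcribed it as $q\not\equiv 1\pmod 4$. That is wrong: for even $n$ it forces $p=2$ and would wrongly exclude, e.g., $q=9$.
\item The lemma's own condition, $q\not\equiv -1\pmod 4$, is different again. For even $n$ one always has $q\equiv 0$ or $1\pmod 4$, so the condition is vacuous. Case (2) therefore claims that $t=0$ is attainable over every $\mathbb{F}_q$.
\end{itemize}

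This claim is false. Over $\mathbb{F}_{25}$ there is no elliptic curve with $26$ rational points, and Waterhouse's list itself excludes it. Concretely: trace $0$ forces supersingularity, and the only supersingular $j$-invariant in characteristic $5$ is $0$. The twists of $y^2=x^3+1$ over $\mathbb{F}_{25}$ have Frobenius $-5\zeta$ with $\zeta\in\mu_6$, hence traces $\pm10$ and $\pm5$, never $0$. More generally, for $n$ even and $p\equiv 1\pmod 4$, Honda--Tate attaches the Weil number $\sqrt{-q}$ to an abelian surface, not to an elliptic curve.

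The congruence $q\not\equiv -1\pmod 4$ actually belongs to Rück's group-structure result. For $a=0$ the group is cyclic when $q\not\equiv -1\pmod 4$, and is either cyclic or $\mathbb{Z}/2\times\mathbb{Z}/\tfrac{q+1}{2}$ when $q\equiv -1\pmod 4$. The statement has conflated this with Waterhouse's existence condition.

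Your Step 3 also misplaces the difficulty. For $n$ even and $a=0$ the problem is existence, not whether Rück's theorem excludes the cyclic group. It never does: your description in Step 2 that $\mathbb{Z}/(N/2)\times\mathbb{Z}/2$ ``may be forced'' is inaccurate, since that group is only an additional possibility when $q\equiv -1\pmod 4$.

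If you replace case (2) by ``$t=0$ and either $n$ odd or $p\not\equiv 1\pmod 4$'', your Steps 1, 2 and 4 prove the lemma as corrected, and Step 3 becomes unnecessary. You should state explicitly that the lemma as printed is false in the subcase $n$ even, $p\equiv 1\pmod 4$, $t=0$, rather than claiming to prove it.
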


\begin{lemma}
Let $E$ be a cyclic elliptic function field and let $P$ be a generator of $\mathbb{P}^{1}_{E}$. Then, for any place $Q$ in $\mathbb{P}_{E}$ of degree $d$ satisfying $\gcd(d,q+1+t)=1$, and for any fixed positive integer $i$, the places $\sigma_{P}^{i}(Q),\sigma_{P}^{i+1}(Q),\ldots,\sigma_{P}^{i+q+t}(Q)$ are all distinct from one another.
\end{lemma}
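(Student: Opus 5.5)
Since $\sigma_P$ has order $N=q+1+t$, the list $\sigma_P^{i}(Q),\ldots,\sigma_P^{i+q+t}(Q)$ consists of $N$ consecutive iterates. It therefore suffices to show that $\sigma_P^{k}(Q)\neq Q$ for $1\le k\le N-1$, i.e.\ that the orbit of $Q$ under the cyclic group $T_E=\langle\sigma_P\rangle$ has trivial stabilizer. Indeed, if $\sigma_P^{i+a}(Q)=\sigma_P^{i+b}(Q)$ with $0\le a<b\le q+t$, then applying $\sigma_P^{-(i+a)}$ gives $\sigma_P^{b-a}(Q)=Q$ with $0<b-a<N$.

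The argument goes through the degree-zero class group. Recall that $E(\mathbb{F}_q)\cong \mathrm{Cl}^0(E)$ via $R\mapsto [R-O]$. Translation by a rational point $R$ acts on divisors, and for any place $Q$ of degree $d$ one has
\[
[\sigma_R(Q)-Q]=d\,[R-O]\quad\text{in } \mathrm{Cl}^0(E).
\]
To justify this, I will extend scalars to $\bar{\mathbb{F}}_q$. There $Q$ splits into $d$ conjugate geometric points $Q_1,\ldots,Q_d$, and $\sigma_R$ sends each $Q_j$ to $Q_j\oplus R$. In the geometric group law we have $(Q_j\oplus R)-Q_j\sim R-O$, and summing over $j$ gives the displayed relation. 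This relation is defined over $\mathbb{F}_q$, and $\mathrm{Cl}^0$ of an elliptic curve injects after extending scalars, since principal divisors descend by Hilbert 90 or the standard fact that $\mathrm{Pic}^0$ injects. By Lemma \ref{1}, $\sigma_R(Q)$ is again a place of degree $d$, so the relation makes sense.

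Now suppose $\sigma_P^{k}(Q)=Q$ for some $1\le k\le N-1$. Since $\sigma_P^k=\sigma_{[k]P}$, the relation gives $0=[\sigma_P^k(Q)-Q]=dk\,[P-O]$. Because $P$ generates the cyclic group of order $N$, the class $[P-O]$ has order exactly $N$, so $N\mid dk$. As $\gcd(d,N)=1$, this forces $N\mid k$, contradicting $1\le k\le N-1$. Hence the stabilizer is trivial and the $N$ places are pairwise distinct.

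The main obstacle is justifying the divisor-class identity $[\sigma_R(Q)-Q]=d[R-O]$ for a non-rational place $Q$, since it relies on the geometric description of translations and a descent of linear equivalence to $\mathbb{F}_q$. An alternative that avoids the class group works with the orbit of a single geometric point $Q_1$: the Frobenius permutes $Q_1,\ldots,Q_d$ cyclically, so $\sigma_P^k(Q)=Q$ means $Q_1\oplus[k]P=Q_{j}$ for some $j$. Iterating this $d$ times (translation commutes with Frobenius because $P$ is rational) yields $[dk]P$ equal to a difference of geometric points that telescopes to $O$, i.e.\ $[m k]P=O$ for some $m\mid d$. The coprimality then finishes the argument in the same way.
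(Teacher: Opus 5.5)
Your proof is correct and complete. The paper states this lemma without proof (it is quoted from the earlier work on cyclic elliptic function fields), so there is no in-paper argument to compare against.

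Within the paper's own setup, the natural justification would be Galois-theoretic. Let $H$ be the stabilizer of $Q$ in $T_E\cong\mathrm{Gal}(E/F)$, where $F=E^{T_E}$. Then $H$ permutes the $d$ geometric points $Q_1,\ldots,Q_d$ above $Q$. It does so freely, because translation by a nonzero point has no fixed point on $E(\bar{\mathbb{F}}_q)$. Hence $|H|$ divides $d$; equivalently, $E/F$ is unramified and $|H|$ is the residue degree of $Q$ over $F$. Since $|H|$ also divides $N$, we get $|H|=1$. Your route replaces this orbit counting with the group law, showing directly that $\sigma_{[k]P}(Q)=Q$ forces $[dk]P=O$. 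Both arguments use the same geometric input, namely how translations act on the points above $Q$. Both also yield the same strength of conclusion: the stabilizer has order dividing $\gcd(d,N)$.

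One simplification: the step you flag as the main obstacle is unnecessary. You never need to descend the linear equivalence to $\mathbb{F}_q$, because the conclusion can be read off in $E(\bar{\mathbb{F}}_q)$. If translation by $R=[k]P$ permutes $Q_1,\ldots,Q_d$, then $\bigoplus_j Q_j=\bigoplus_j (Q_j\oplus R)=\bigoplus_j Q_j\oplus[d]R$. So $[dk]P=O$, and $P$ still has order exactly $N$ in $E(\bar{\mathbb{F}}_q)$. Likewise, whether $\sigma_R$ moves geometric points by $\oplus R$ or $\ominus R$ depends on the paper's convention $\tau(z)(P)=z(\tau^{-1}(P))$. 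That choice only changes a sign and does not affect the conclusion.
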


Let $d$ be a positive divisor that is coprime to $q+1+t$, and let $B_d$ denote the number of places of degree $d$. Then the preceding lemma shows that there are precisely $r = B_d/(q+1+t)$ places of degree $d$ of $F = E^{T_E}$ that split completely in the extension $E/F$.

	\section{ The general construction of the pseudorandom sequences }
	\label{sec:3}
   In this section, we generalize the constructions from \cite{5,22} to sequences defined over finite fields of odd characteristic. We also introduce some notation that will be used throughout the section.
\begin{enumerate}
    \item $P$ is a rational place of $F$;
    \item $\sigma$ is an automorphism in $\mathrm{Aut}(F/\mathbb{F}_{q})$;
    \item $n$ is the smallest positive integer such that $\sigma^{n}(P)=P$ that is, $\sigma^{n}(P)=P$ and $\sigma^{i}(P)\neq P$ for every $1\leq i\leq n-1$.
\end{enumerate}
Set $P_i = \sigma^{i}(P)$ for every integer $i\in\mathbb{Z}$, with $P_0 = P$. Then we have $P_{j+n} = P_j$ for all $j\in\mathbb{Z}$, and for any fixed $\ell\in\mathbb{Z}$, the places
\[
P_{\ell}, P_{\ell+1}, \ldots, P_{\ell+n-1}
\]
 are $n$ pairwise distinct rational places. For any element $z \in F$ satisfying $v_{P_i}(z)\ge 0$, we associate the $p$-ary sequence
\[
\mathbf{s}_z = (\mathrm{Tr}(z(P_{0})), \mathrm{Tr}(z(P_{1})),\cdots,\mathrm{Tr}(z(P_{n-1}))).
\]
It follows that $n$ is a multiple of the period of $\mathbf{s}_z$.

We begin by studying the least period of the sequence.

\begin{proposition}
   Let $z\in F$ be an element with $v_{P_i}(z)\ge 0$. Suppose that $Q$ is the only pole of $z$, that $\gcd(v_Q(z),p)=1$, and that the places $Q,\sigma(Q),\ldots,\sigma^{n-1}(Q)$ are all distinct. If $d=\deg(z)_\infty$ and
\[
n> \left(2g(F)-2+2\bigl(m^{*}_{Q}(z)+1\bigr)d\right)\sqrt{q},
\]
then the sequence $\mathbf{s}_z$ has period exactly $n$.
    \begin{proof}
        Assume, for contradiction, that the least period of $\mathbf{s}_{z}$ is $k$ with $0<k<n$. Consider the rational function $f=z-\sigma^{k}(z)$. Since $Q$ is the only pole of $z$, Lemma \ref{1} implies that $\sigma^{k}(Q)$ is the only pole of $\sigma^{k}(z)$. Hence, the poles of $f$ are precisely $Q$ and $\sigma^{k}(Q)$, and
        \[
            \deg(f)_{\infty}=2v_{Q}(z)d.
        \]
        For any $i\ge 0$, we compute
        \begin{displaymath}
        \begin{split}
            f(P_{i+k})&=z(P_{i+k})-\sigma^{k}(z)(P_{i+k})\\
            &=z(P_{i+k})-z(\sigma^{-k}(P_{i+k}))\\
            &=z(P_{i+k})-z(P_{i}),
        \end{split}
        \end{displaymath}
        valid for all $i\ge 0$. Consequently,
        \begin{displaymath}
            \mathrm{Tr}(f(P_{i+k}))=\mathrm{Tr}(z(P_{i+k})-z(P_{i}))=0.
        \end{displaymath}
        Therefore,
        \begin{displaymath}
            \left|\sum_{i=0}^{n-1}\omega_{p}^{\mathrm{Tr}(f(P_{i+k}))}\right|=n.
        \end{displaymath}
        On the other hand, by Lemma \ref{2} we have
        \begin{displaymath}
        \begin{split}
            \left|\sum_{i=0}^{n-1}\omega_{p}^{\mathrm{Tr}(f(P_{i+k}))}\right|
            &\le \left(2g(F)-2+\bigl(m^{*}_{Q}(f)+m^{*}_{\sigma^{k}(Q)}(f)+2\bigr)d\right)\sqrt{q}\\
            &=\left(2g(F)-2+2(m^{*}_{Q}(z)+1)d\right)\sqrt{q}.
        \end{split}
        \end{displaymath}
        Applying Lemma \ref{2} thus yields
        \[
            n\le \left(2g(F)-2+2(m^{*}_{Q}(z)+1)d\right)\sqrt{q},
        \]
        contradicting our assumption on $n$. This proves the assertion.
    \end{proof}
    \label{3}
\end{proposition}

The linear complexity has likewise been examined in \cite{5,22,10,9,3} \textit{et al}. In this work, we analyze the linear complexity of a $p$-ary sequence generated from exponential sums defined over general algebraic function fields.

We first treat the case $p=q$. The basic idea follows essentially the same argument as in \cite{22}.
\begin{theorem}
  Let $z\in F$ be such that $v_{P_{i}}(z)\geq 0$. Suppose that $Q$ is the unique pole of $z$ and that $\gcd(v_{Q}(z),p)=1$; furthermore, assume that the places $Q,\sigma(Q),\ldots,\sigma^{n-1}(Q)$ are all distinct. Then the linear complexity of the sequence $\mathbf{s}_{z}$ satisfies
\[
\mathrm{L}(\mathbf{s}_{z})\geq\frac{n-m^{*}_{Q}(z)d}{m^{*}_{Q}(z)d+1}.
\]
\begin{proof}
   Assume that the linear complexity of $\mathbf{s}_{z}$ is $\mathrm{L}=\mathrm{L}(\mathbf{s}_{z})$ with $0\leq \mathrm{L}<n$. Then there exist elements $\lambda_{0},\lambda_{1},\ldots,\lambda_{\mathrm{L}}\in\mathbb{F}_{p}$ such that $\lambda_{0}=1$, $\lambda_{\mathrm{L}}\neq 0$, and
\[
\lambda_{0}s_{i+\mathrm{L}}+\lambda_{1}s_{i+\mathrm{L}-1}+\cdots+\lambda_{\mathrm{L}}s_{i}=0
\]
for all $i\geq 0$.

Now consider the rational function
\[
f=\lambda_{0}\sigma^{-\mathrm{L}}(z)+\lambda_{1}\sigma^{-\mathrm{L}+1}(z)+\cdots+\lambda_{\mathrm{L}}z.
\]
By Lemma \ref{88}, the function $f$ is non-degenerate. Furthermore, for each $0\leq i\leq \mathrm{L}+1$, the unique pole of $\sigma^{-i}_{P}(z)$ is $\sigma^{-i}(Q)$. Hence the poles of $f$ are precisely the points $Q,\sigma^{-1}(Q),\ldots,\sigma^{-\mathrm{L}}(Q)$. It follows that $f\neq 0$, and therefore $\deg(f)=(\mathrm{L}+1)d$. For every $i\geq 0$, we have
        \begin{displaymath}
        \begin{split}
            f(P_{i})&=\lambda_{0}\sigma^{-\mathrm{L}}(z)(P_{i})+\lambda_{1}\sigma^{-\mathrm{L}+1}(z)(P_{i})+\cdots+\lambda_{\mathrm{L}}z(P_{i})\\
            &=\lambda_{0}z(\sigma^{\mathrm{L}}(P_{i}))+\lambda_{1}z(\sigma^{\mathrm{L}-1}(P_{i}))+\cdots+\lambda_{\mathrm{L}}z(P_{i})\\
            &=\lambda_{0}z(P_{i+\mathrm{L}})+\lambda_{1}z(P_{i+\mathrm{L}-1})+\cdots+\lambda_{\mathrm{L}}z(P_{i}).
        \end{split}
    \end{displaymath}
    Consequently, we obtain
    \begin{displaymath}
        \begin{split}
            \mathrm{Tr}(f(P_{i}))&=\lambda_{0}\mathrm{Tr}(z(P_{i+\mathrm{L}}))+\lambda_{1}\mathrm{Tr}(z(P_{i+\mathrm{L}-1}))+\cdots+\lambda_{\mathrm{L}}\mathrm{Tr}(z(P_{i}))\\
            &=s_{i+\mathrm{L}}+d_{1}s_{i+\mathrm{L}-1}+\cdots+\lambda_{\mathrm{L}}s_{i}=0.\\
        \end{split}
    \end{displaymath}
Thus, the places $[i]P$ are zeros of $f$ for $0 \leq i \leq n-1-\mathrm{L}$. Denote by $\mathcal{Z}(f)$ and $\mathcal{N}(f)$ the sets of zeros and poles of $f$, respectively. Then
\begin{displaymath}
\begin{split}
    \sum_{P\in\mathcal{Z}(f)} v_{P}(f)\deg(P)
    &= -\sum_{P\in\mathcal{N}(f)} v_{P}(f)\deg(P)\\
    &= (\mathrm{L}+1)m^{*}_{Q}(z)d.
\end{split}
\end{displaymath}
Consequently,
\begin{displaymath}
    n-\mathrm{L}\leq (\mathrm{L}+1)m^{*}_{Q}(z)d,
\end{displaymath}
which can be rewritten as
\begin{displaymath}
 \mathrm{L}(\mathbf{s}_{z})\geq\frac{n-m^{*}_{Q}(z)d}{m^{*}_{Q}(z)d+1}.
\end{displaymath}
This yields the desired assertion.    
        \end{proof}
\end{theorem}

In what follows, we restrict our attention to the case $q>p$, where it is necessary to utilize the estimation for exponential sums over function fields provided in Theorem~\ref{2}.
\begin{theorem}
   Let $z\in F$ be such that $v_{P_{i}}(z)\geq 0$. Assume that $Q$ is the unique pole of $z$ with $\gcd(v_{Q}(z),p)=1$, and that the places $Q,\sigma(Q),\dots,\sigma^{n-1}(Q)$ are mutually distinct. Then the linear complexity of $\mathbf{s}_{z}$ satisfies
\[
\mathrm{L}(\mathbf{s}_{z})\geq\frac{n-(2g(F)-2)\sqrt{q}}{(m^{*}_{Q}(z)+1)d\sqrt{q}}-1.
\]
    \begin{proof}
      Suppose that the linear complexity of $\mathbf{s}_{z}$ is $\mathrm{L}=\mathrm{L}(\mathbf{s}_{z})$, with $0\leq \mathrm{L}<n$. Then there exist elements $\lambda_{0},\lambda_{1},\ldots,\lambda_{\mathrm{L}}\in\mathbb{F}_{p}$ such that $\lambda_{0}=1$ and $\lambda_{\mathrm{L}}\neq 0$, and they satisfy
\[
\lambda_{0}s_{i+\mathrm{L}}+\lambda_{1}s_{i+\mathrm{L}-1}+\cdots+\lambda_{\mathrm{L}}s_{i}=0
\]
for all integers $i\geq 0$.

Now consider the rational function
\[
f = \lambda_{0}\sigma^{-\mathrm{L}}(z) + \lambda_{1}\sigma^{-\mathrm{L}+1}(z) + \cdots + \lambda_{\mathrm{L}}z,
\]
which is non-degenerate by Lemma \ref{88}. Furthermore, for each integer $0 \leq i \leq \mathrm{L}+1$, the pole of $\sigma^{-i}_{P}(z)$ lies at the point $\sigma^{-i}(Q)$. Hence, the set of poles of $f$ is exactly
\[
Q, \sigma^{-1}(Q), \ldots, \sigma^{-\mathrm{L}}(Q).
\]
It follows that $f$ is nonzero, and its degree is given by $\deg(f) = (\mathrm{L}+1)d$. For every $i \geq 0$, we obtain
        \begin{displaymath}
        \begin{split}
            f(P_{i})&=\lambda_{0}\sigma^{-\mathrm{L}}(z)(P_{i})+\lambda_{1}\sigma^{-\mathrm{L}+1}(z)(P_{i})+\cdots+\lambda_{\mathrm{L}}z(P_{i})\\
            &=\lambda_{0}z(\sigma^{\mathrm{L}}(P_{i}))+\lambda_{1}z(\sigma^{\mathrm{L}-1}(P_{i}))+\cdots+\lambda_{\mathrm{L}}z(P_{i})\\
&=\lambda_{0}z(P_{i+\mathrm{L}})+\lambda_{1}z(P_{i+\mathrm{L}-1})+\cdots+\lambda_{\mathrm{L}}z(P_{i}).
            \end{split}
            \end{displaymath}
            It follows that
            \begin{displaymath}
            \begin{split}
                \mathrm{Tr}(f(P_{i}))&=\lambda_{0}\mathrm{Tr}(z(P_{i+\mathrm{L}}))+\lambda_{1}\mathrm{Tr}(z(P_{i+\mathrm{L}-1}))+\cdots+\lambda_{\mathrm{L}}\mathrm{Tr}(z(P_{i}))\\
&=s_{i+\mathrm{L}}+d_{1}s_{i+\mathrm{L}-1}+\cdots+\lambda_{\mathrm{L}}s_{i}=0.\\
                \end{split}
            \end{displaymath}
            Hence, we have 
            \begin{displaymath}
                \left|\sum^{n-1}_{i=0}\omega_{p}^{\mathrm{Tr}(f(P_{i}))}\right|=n.
                \end{displaymath}
            On the other hand, by Lemma \ref{2}, we have
            \begin{displaymath}
                \left|\sum^{n-1}_{i=0}\omega_{p}^{\mathrm{Tr}(f(P_{i}))}\right|\leq \left(2g(F)-2+\sum^{\mathrm{L}}_{i=0}(v_{\sigma^{i}(Q)}^{*}(f)+1)d\right)\sqrt{q}.
                \end{displaymath}
            Then we have
            \begin{displaymath}
\mathrm{L}(\mathbf{s}_{z})\geq\frac{n-(2g(F)-2)\sqrt{q}}{(m^{*}_{Q}(z)+1)d\sqrt{q}}-1.
            \end{displaymath}
\end{proof}
                \label{10}
\end{theorem}

\begin{corollary}
   Let $z\in F$ be such that $v_{P_{i}}(z)\geq 0$. Assume that $Q$ is the unique pole of $z$ with $\gcd(v_{Q}(z),p)=1$, and that the places $Q,\sigma(Q),\dots,\sigma^{n-1}(Q)$ are mutually distinct. Then we have the following estimation
   \begin{displaymath}
        |2\mathrm{L}(\mathbf{s}_{z})-n|\leq \frac{2n-\left(  2(2g(F)-2)+(n+2)(m^{*}_{Q}(z)+1)d\right)\sqrt{q}}{(m^{*}_{Q}(z)+1)d\sqrt{q}}.
    \end{displaymath}
\end{corollary}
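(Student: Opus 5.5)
The plan is to reduce the two-sided statement to the lower bound of Theorem~\ref{10} plus one further inequality. Write $c=m^{*}_{Q}(z)+1$ and
\[
L_{0}=\frac{n-(2g(F)-2)\sqrt{q}}{cd\sqrt{q}}-1 ,
\]
so that Theorem~\ref{10} reads $\mathrm{L}(\mathbf{s}_{z})\ge L_{0}$. Call $R$ the right-hand side of the corollary. Splitting its numerator gives
\[
R=\frac{2n}{cd\sqrt{q}}-\frac{2(2g(F)-2)}{cd}-(n+2)=2(L_{0}+1)-(n+2)=2L_{0}-n .
\]
So the corollary is literally the pair of inequalities
\[
2\mathrm{L}(\mathbf{s}_{z})-n\le 2L_{0}-n
\qquad\text{and}\qquad
n-2\mathrm{L}(\mathbf{s}_{z})\le 2L_{0}-n ,
\]
that is, $\mathrm{L}(\mathbf{s}_{z})\le L_{0}$ and $\mathrm{L}(\mathbf{s}_{z})\ge n-L_{0}$. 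I will treat the two directions separately.

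For $\mathrm{L}(\mathbf{s}_{z})\ge n-L_{0}$, I first try to use Theorem~\ref{10} directly. It gives $\mathrm{L}\ge L_{0}$, so this direction follows once $L_{0}\ge n/2$. If $L_{0}<n/2$, the inequality has to come from elsewhere, and I would rerun the zero-counting argument of the $p=q$ theorem. The annihilating function $f=\sum_{j}\lambda_{j}\sigma^{-\mathrm{L}+j}(z)$ satisfies $\mathrm{Tr}(f(P_{i}))=0$ for all $i$. Its pole degree is $(\mathrm{L}+1)d$, which bounds the number of places where the exponential sum can be "forced". Comparing this with $n$ via the Weil--Deligne bound (Theorem~\ref{2}) again yields $n\le\bigl(2g(F)-2+(\mathrm{L}+1)cd\bigr)\sqrt{q}$, i.e. $\mathrm{L}\ge L_{0}$, so this route reproduces the same lower bound.

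For $\mathrm{L}(\mathbf{s}_{z})\le L_{0}$ an upper bound on the linear complexity is needed, and nothing stated earlier in the paper supplies one except the trivial $\mathrm{L}\le n$. My first attempt is a Riemann--Roch construction. Consider the space $\mathcal{L}\bigl(\sum_{j=0}^{k}\sigma^{-j}(Q)\cdot v\bigr)$, where $v=-v_{Q}(z)$. By Riemann--Roch its dimension is about $(k+1)vd+1-g(F)$. I would look inside it for an $\mathbb{F}_{p}$-combination of the translates $\sigma^{-j}(z)$ whose trace vanishes on all $P_{i}$, which would produce a recurrence of length $k$. The problem is that the translates $\sigma^{-j}(z)$ have pairwise distinct poles, so they are linearly independent, and no nontrivial combination is forced to vanish on $n$ places while $k<n$.

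This upper direction is the main obstacle, and I expect it to fail as stated. Combining it with Theorem~\ref{10} would force $\mathrm{L}(\mathbf{s}_{z})=L_{0}$ exactly, which cannot hold in general since $L_{0}$ need not be an integer. Even granting both inequalities, they also require $n-L_{0}\le L_{0}$, i.e. $R\ge0$, which is not part of the hypotheses. So the argument can only be carried through rigorously under two extra conditions: that $R\ge0$, and that an independent upper estimate on $\mathrm{L}$ matches $L_{0}$. What the paper's tools provably give is the one-sided inequality $2\mathrm{L}(\mathbf{s}_{z})-n\ge R$, obtained from Theorem~\ref{10} by the identity $R=2L_{0}-n$.
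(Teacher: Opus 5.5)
Your identity $R=2L_{0}-n$ is correct, and so is your conclusion that the corollary cannot be proved as printed. The paper gives no argument for it; the corollary is only asserted after Theorem~\ref{10}.

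You can make your hedged ``I expect it to fail'' definitive. The right-hand side is negative for every admissible $z$, so the inequality is false outright. The extra hypothesis $R\ge 0$ you suggest can therefore never be satisfied.

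To see this, write $v_{Q}(z)=-k$ with $p\nmid k$. A pole of $h-h^{p}$ at $Q$ has order divisible by $p$, so subtracting any $h-h^{p}$ cannot lower the pole order at $Q$. Hence $m^{*}_{Q}(z)=k\ge 1$ and $c\ge 2$. With $d\ge 1$ and $q\ge 2$, for $g(F)=0$ this gives
\[
R=\frac{2n}{cd\sqrt{q}}+\frac{4}{cd}-2-n\le \frac{n}{\sqrt{q}}-n<0 .
\]
For $g(F)\ge 1$ the term $-2(2g(F)-2)/(cd)$ is nonpositive, so $R\le n/\sqrt{q}-n-2<0$. Thus $|2\mathrm{L}(\mathbf{s}_{z})-n|\ge 0>R$ in every case, including the paper's own rational and elliptic examples.

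What survives is exactly the one-sided statement you isolate: $n-2\mathrm{L}(\mathbf{s}_{z})\le n-2L_{0}=-R$. This is just Theorem~\ref{10} rewritten, and the printed right-hand side looks like a sign slip for $-R$.

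Even with the sign fixed, the other half would require $\mathrm{L}(\mathbf{s}_{z})\le n-L_{0}$. Nothing in the paper gives an upper bound on the linear complexity. Two-sided estimates of the form $|2\mathrm{L}_{t}-t|\le\cdots$ are normally proved for the linear complexity profile. One combines lower bounds for every prefix length $t$ with the jump relation $\mathrm{L}_{t+1}=t+1-\mathrm{L}_{t}$. Theorem~\ref{10} relies on a complete character sum over a full period, so it only supplies the case $t=n$.

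Your Riemann--Roch paragraph for the upper direction does not work: linear independence of the translates $\sigma^{-j}(z)$ says nothing about their traces at the $P_{i}$. Since you discard it, it does not affect your conclusion.
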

From the preceding results, we immediately obtain a corollary concerning perfect sequences, showing that the construction can actually be perfect when working over rational function fields or cyclic elliptic function fields and we shall list them in the last section.

We now examine the auto- and cross-correlation properties of the sequences defined above.
\begin{theorem}
     Let $z_i \in F$ be two elements of $F$ such that $v_{P_j}(z_i)\ge 0$ for all $0 \le j \le n$. Assume that for each $i$, the element $z_i$ has a unique pole $Q_i$ with $\gcd\big(v_{Q_i}(z_i),p\big)=1$, and let $d_i = \deg\big((z_i)_\infty\big)$ for $1 \le i \le 2$ (we allow the case $z_1 = z_2$). Suppose further that $z_1 + \sigma^{-\tau}(z_2)$ is non-degenerate for some $\tau$. Then the correlation at shift $\tau$ satisfies
    \begin{displaymath}
    \left|C_{\mathbf{s}_{z_{1}},\mathbf{s}_{z_{2}}}(\tau)\right|\leq    \begin{cases}
 \left(2g(F)-2+\sum^{2}_{i=1}(m^{*}_{Q_{i}}(z_{i})+1)d_{i}\right)\sqrt{q},&\text{if}\ 0<\tau< n;\\
 \left(2g(F)-2+\sum^{2}_{i=1}(m^{*}_{Q_{i}}(z_{i})+1)d_{i}\right)\sqrt{q},&\text{if}\ \tau=0,Q_{1}\neq Q_{2};\\       
 \left(2g(F)-2+(m^{*}_{Q_{1}}(z_{1})+1)d_{1}\right)\sqrt{q},&\text{if}\ \tau=0, Q_{1}=Q_{2},z_{1}\neq z_{2}.\\         
        \end{cases}
        \end{displaymath}
         \label{1989}
         \end{theorem}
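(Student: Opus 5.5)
The plan is to write the correlation as a single character sum over the orbit $P_0,\dots,P_{n-1}$ of one function in $F$, and then apply the Weil--Deligne--Bombieri estimate (Theorem \ref{2}) to that function. Since the statement involves a difference of sequences, the natural auxiliary function is $f=z_1-\sigma^{-\tau}(z_2)$; the hypothesis, written with a ``$+$'', presumably means this difference, which is equivalent up to replacing $z_2$ by $-z_2$. By Lemma \ref{1}(3), $\sigma^{-\tau}(z_2)(P_i)=z_2(\sigma^{\tau}(P_i))=z_2(P_{i+\tau})$. The definition of $C_{\mathbf{s}_{z_1},\mathbf{s}_{z_2}}(\tau)$ puts the shift on the first sequence, so I would apply $\sigma^{-\tau}$ to $z_1$ instead. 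Either way, reindexing $i\mapsto i\pm\tau$ modulo $n$ (legitimate because $P_{j+n}=P_j$) gives
\[
C_{\mathbf{s}_{z_1},\mathbf{s}_{z_2}}(\tau)=\sum_{i=0}^{n-1}\omega_p^{\mathrm{Tr}(f(P_i))},
\]
with $f$ equal to $z_1-\sigma^{-\tau}(z_2)$ up to this harmless relabelling. All $P_i$ are rational and are not poles of $f$.

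Next, I would pass from the orbit sum to the full sum in Theorem \ref{2}, which runs over all of $A=P^1(F)\setminus\mathrm{Supp}((f)_\infty)$. This is the main obstacle: $\{P_0,\dots,P_{n-1}\}$ is in general only a subset of $A$, so there is a mismatch. The earlier results in this section (Proposition \ref{3} and Theorem \ref{10}) apply Theorem \ref{2} directly to the orbit sum. I would follow the same convention, either tacitly assuming the orbit exhausts the rational places outside the poles (as happens in the rational and cyclic elliptic examples of later sections), or else accepting an additive error of $|A|-n$. I would state this explicitly rather than hide it.

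The pole divisor of $f$ is then analysed case by case. By Lemma \ref{1}, $\sigma^{-\tau}(z_2)$ has the unique pole $\sigma^{-\tau}(Q_2)$, of the same order and of degree $d_2$. The reduced orders satisfy $m^*_{\sigma^{-\tau}(Q_2)}(\sigma^{-\tau}(z_2))=m^*_{Q_2}(z_2)$, because $\sigma$ commutes with $\mathscr{P}$.
\begin{itemize}
\item In the case $0<\tau<n$ or $Q_1\neq Q_2$, if the two poles are distinct, they contribute separately. Subtracting a function regular at one pole does not change the reduced order there, so the bound of Theorem \ref{2} is $\bigl(2g(F)-2+\sum_{i}(m^*_{Q_i}(z_i)+1)d_i\bigr)\sqrt{q}$. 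If the poles happen to coincide, there is a single pole with reduced order at most $\max_i m^*_{Q_i}(z_i)$, which gives an even smaller bound.
\item In the case $\tau=0$, $Q_1=Q_2$ and $z_1\neq z_2$, the function $f=z_1-z_2$ has at most the single pole $Q_1$. Its reduced order is at most $\max(m^*_{Q_1}(z_1),m^*_{Q_2}(z_2))$, which I would take to be bounded by $m^*_{Q_1}(z_1)$ as in the statement (implicitly assuming the $z_1$ term dominates, e.g.\ $d_1\ge d_2$). This gives the third bound.
\end{itemize}

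Non-degeneracy of $f$ is assumed in the hypothesis, so Theorem \ref{2} applies in every case, and this completes the plan.
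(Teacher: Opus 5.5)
Your plan is the paper's argument. You build one auxiliary function from $z_1$ and a $\sigma$-translate of $z_2$, use $\sigma^{-\tau}(z)(P_i)=z(P_{i+\tau})$ (Lemma~\ref{1}) to rewrite $C_{\mathbf{s}_{z_1},\mathbf{s}_{z_2}}(\tau)$ as a sum over $P_0,\dots,P_{n-1}$, and then apply Theorem~\ref{2} with the pole bookkeeping you describe. The three caveats you raise are genuine, and the paper's proof resolves none of them:
\begin{enumerate}
\item It writes the exponent with a ``$+$'' and puts the shift on $z_2$, so its auxiliary function does not match the definition of $C(\tau)$.
\item It applies Theorem~\ref{2} verbatim to the orbit sum, exactly as in Proposition~\ref{3} and Theorem~\ref{10}.
\item In the case $\tau=0$, $Q_1=Q_2$, it never compares the pole orders of $z_1$ and $z_2$ at $Q_1$.
\end{enumerate}
So you lose nothing relative to the paper. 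But neither argument proves the statement in the generality of Section~\ref{sec:3}. What is actually established is the following:
\begin{enumerate}
\item The stated bound holds under the extra hypothesis that every rational place outside $\mathrm{Supp}((f)_\infty)$ is some $P_i$; without it, only up to an additive $|A|-n$.
\item In the third case it holds with $\max_i\,(m^*_{Q_i}(z_i)+1)d_i$ in place of the $z_1$-term. This is exactly what your ``$d_1\ge d_2$'' gives, since $|C(0)|$ is symmetric in $z_1,z_2$.
\end{enumerate}
The second restriction is not cosmetic. Take $p$ odd, with $z_1$ having a simple pole and $z_2$ a double pole at $Q_1$. Then $v_{Q_1}(z_1-z_2)=-2$, and Theorem~\ref{2} gives only $(2g(F)-2+3\deg Q_1)\sqrt q$, which is larger than the asserted $(2g(F)-2+2\deg Q_1)\sqrt q$.

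One correction to your parenthetical. The orbit does exhaust the non-polar rational places in the cyclic elliptic construction of Section~\ref{sec:5}: there the orbit is all of $E(\mathbb{F}_q)$ and the $Q_i$ have degree $d\ge 2$. It does not in the rational construction of Section~\ref{sec:4}. There the orbit is $\{x=a : a\in\mathbb{F}_q^*\}$, while the zero and the pole of $x$ are rational places that are not poles of $f$ and lie outside the orbit. Your method, and the paper's, therefore gives the bound there only up to an additive $2$.

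Also, when the two poles of $f$ are distinct you do not need the mis-signed non-degeneracy hypothesis at all. In that case $v_{Q_1}(f)=v_{Q_1}(z_1)$ is prime to $p$, so Lemma~\ref{88} already gives non-degeneracy. The hypothesis, and hence the question of how to read its sign, matters only when the poles coincide.
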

    \begin{proof}
       The argument proceeds in the same way as in Proposition \ref{3}. Take the rational function $z = z_{1} + \sigma^{-\tau}(z_{2})$. It is straightforward to verify that $f$ has exactly two poles, namely $Q$ and $\sigma^{-\tau}(Q)$. By definition, we obtain
        \begin{displaymath}
        \begin{split}
             \left|C_{\mathbf{s}_{z_{1}},\mathbf{s}_{z_{2}}}(\tau)\right|&=\left|\sum^{n}_{i=1}\omega_{p}^{\mathrm{Tr}(z_{1}(P_{i}))+\mathrm{Tr}(z_{2}(P_{i+\tau}))}\right|=\left|\sum^{n}_{i=1}\omega_{p}^{\mathrm{Tr}(z_{1}(P_{i}))+\mathrm{Tr}(\sigma^{-\tau}(z_{2})(P_{i}))}\right|\\
             &=\left|\sum^{n}_{i=1}\omega_{p}^{\mathrm{Tr}(z_{1}(P_{i})+\sigma^{-\tau}(z_{2})(P_{i}))}\right|=\left|\sum^{n}_{i=1}\omega_{p}^{\mathrm{Tr}(z(P_{i}))}\right|\\
              &\leq \left(2g(F)-2+(m^{*}_{Q_{1}}(f)+1)d_{1}+(m^{*}_{\sigma^{-\tau}(Q_{2})}(f)+1)d_{2}\right)\sqrt{q}\ \text{(By Lemma \ref{2})}\\
              &= \left(2g(F)-2+\sum_{i=1}^{2}(m^{*}_{Q_{i}}(z_{i})+1)d_{i}\right)\sqrt{q}\\
              \end{split}
             \end{displaymath}
    for $0<\tau<n$, and if $\tau=0$ and $Q_{1}\neq Q_{2}$, then we have
    \begin{displaymath}
         \left|C_{\mathbf{s}_{z_{1}},\mathbf{s}_{z_{2}}}(\tau)\right|\leq 
          \left(2g(F)-2+\sum_{i=1}^{2}(m^{*}_{Q_{i}}(z_{i})+1)d_{i}\right)\sqrt{q}.
          \end{displaymath}
        Applying Lemma \ref{2} once more yields the required conclusions. In the case where $\tau=0, Q_{1}=Q_{2}$, and $z_{1}\neq z_{2}$, the function $f$ has a single pole at $Q_{1}$. Combining Lemma \ref{2} with the condition $\gcd(v_{Q_{i}}(z_{i}),p)=1$ for $i=1,2$ again leads to the desired result.
         \end{proof}

We now study the distribution of $r$-patterns in the sequences constructed above. To begin, we recall the orthogonality of exponential sums.
\begin{lemma}
    \begin{displaymath}
        \sum^{p-1}_{d=0}\omega_{p}^{ad}=\begin{cases}
            p,&\text{if}\ p\mid a;\\
            0,&\text{otherwise}.
        \end{cases}
    \end{displaymath}
    \label{666}
\end{lemma}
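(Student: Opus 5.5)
The orthogonality relation of Lemma \ref{666} is elementary: I will split according to whether $p\mid a$.

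\textbf{Case $p\mid a$.} Then $\omega_p^{a}=e^{2\pi i a/p}=1$. Hence every term $\omega_p^{ad}=(\omega_p^{a})^{d}$ equals $1$. Summing over the $p$ values $d=0,1,\dots,p-1$ gives $p$.

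\textbf{Case $p\nmid a$.} Put $\zeta=\omega_p^{a}$. Since $\omega_p$ has exact multiplicative order $p$ and $p\nmid a$, we have $\zeta\neq 1$. Also $\zeta^{p}=(\omega_p^{p})^{a}=1$. The sum is then a finite geometric series:
\[
\sum_{d=0}^{p-1}\zeta^{d}=\frac{\zeta^{p}-1}{\zeta-1}=0 .
\]

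The only point needing any care is justifying $\zeta\neq1$. This holds because $\omega_p^{a}=1$ forces $a/p\in\mathbb{Z}$. Here $a$ is understood as an integer, or as a residue in $\mathbb{F}_p$ lifted to $\mathbb{Z}$; the sum does not depend on the chosen lift, since $\omega_p^{p}=1$. Nothing from the function-field machinery is needed.

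For later use in counting $r$-patterns, I would note the standard reformulation: for $x\in\mathbb{F}_p$,
\[
\frac1p\sum_{d\in\mathbb{F}_p}\omega_p^{xd}
\]
is the indicator function of $x=0$.
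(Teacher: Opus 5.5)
Your proof is correct: splitting on $p\mid a$ and summing the geometric series with $\zeta=\omega_p^{a}\neq 1$, $\zeta^{p}=1$, is the standard argument, and your remark about lifting $a$ from $\mathbb{F}_p$ to $\mathbb{Z}$ covers how the lemma is used later. The paper gives no proof at all; it recalls this orthogonality relation as well known and uses it, in exactly your indicator-function form, to count $r$-patterns, so your argument fills in what the paper takes for granted.
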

The following theorem presents the estimation of $r$-patterns of the sequences we construct.
\begin{theorem}
    Let $z\in F$ satisfy $v_{P_{i}}(z)\geq 0$. Suppose that $Q$ is the unique pole of $z$, that the places $Q,\sigma(Q),\ldots,\sigma^{n-1}(Q)$ are all mutually distinct, and that $\gcd(p,v_{Q}(z))=1$. Let $(\overline{z},\overline{t})\in\mathbb{F}_{p}^{r}\times[0,n)^{r}$ be an $r$-pattern of the sequence $\mathbf{s}_{z}$, where $\overline{z}=(z_{1},z_{2},\ldots,z_{r})$ and $\overline{t}=(t_{1},t_{2},\ldots,t_{r})$ with $0\leq t_{1}<t_{2}<\cdots<t_{r}<T$. We define $N_{(\overline{z},\overline{t})}$ as the number of indices $i_{0}$, $0\leq i_{0}<T$, such that $s_{i_{0}+t_{j}}=z_{j}$ holds for every $j=1,2,\ldots,r$. Then we have
    \begin{displaymath}
        \left|N_{(\overline{z},\overline{t})}-\frac{n}{p^{r}}\right|\leq (2g(F)-2)\sqrt{q}+dr\sqrt{q}(m^{*}_{Q}(z)+1)(1-\frac{1}{p}).   
           \end{displaymath}
    \begin{proof}
       Using the orthogonality of the exponential sums established in Lemma \ref{666}, we obtain
        \begin{displaymath}
        \begin{split}
            N_{(\overline{z},\overline{t})}&=\sum^{n-1}_{i=0}\frac{1}{p}\sum^{p-1}_{d_{1}=0}\omega_{p}^{d_{1}(s_{i+t_{1}}-z_{1})}\cdots\frac{1}{p}\sum^{p-1}_{d_{r}=0}\omega_{p}^{d_{r}(s_{i+t_{r}}-z_{r})}\\
            &=\frac{1}{p^{r}}\sum^{p-1}_{d_{1},d_{2},\cdots,d_{r}=0}\omega_{p}^{-d_{1}z_{1}-\cdots-d_{r}z_{r}}\sum^{n-1}_{i=0}\omega_{p}^{d_{1}s_{i+t_{1}}+\cdots+d_{r}s_{i+t_{r}}}.\\
            \end{split}
    \end{displaymath}
Thus
\begin{displaymath}
\begin{split}
 N_{(\overline{z},\overline{t})}-\frac{n}{p^{r}}=\frac{1}{p^{r}}\sum^{p-1}_{(d_{1},d_{2},\cdots,d_{r})\neq (0,0,\cdots,0)}\omega_{p}^{-d_{1}z_{1}-\cdots-d_{r}z_{r}}\sum^{n-1}_{i=0}\omega_{p}^{d_{1}s_{i+t_{1}}+\cdots+d_{r}s_{i+t_{r}}}.
 \end{split}
\end{displaymath}
Thus, we have the upper bound
\begin{displaymath}
\begin{split}
 \left|N_{(\overline{z},\overline{t})}-\frac{n}{p^{r}}\right|\leq\frac{1}{p^{r}}\sum^{p-1}_{(d_{1},d_{2},\cdots,d_{r})\neq (0,0,\cdots,0)}\left|\sum^{n-1}_{i=0}\omega_{p}^{d_{1}s_{i+t_{1}}+\cdots+d_{r}s_{i+t_{r}}}\right|.
 \end{split}
\end{displaymath}
Now we examine the bound for $\left|\sum^{n-1}_{i=0}\omega_{p}^{d_{1}s_{i+t_{1}}+\cdots+d_{r}s_{i+t_{r}}}\right|$.

Without loss of generality, let $k$ denote the Hamming weight of the vector $(d_{1},\ldots,d_{r})$, where $1\leq k\leq r$ and suppose that $d_{1},\ldots,d_{k}\in\mathbb{F}_{p}^{*}$ while $d_{k+1}=\cdots=d_{r}=0$. Under this assumption, we have
\begin{displaymath}
\sum^{n-1}_{i=0}\omega_{p}^{d_{1}s_{i+t_{1}}+\cdots+d_{r}s_{i+t_{r}}}=\sum^{n-1}_{i=0}\omega_{p}^{d_{1}s_{i+t_{1}}+\cdots+d_{k}s_{i+t_{k}}}.
\end{displaymath}
Consider $f=d_{1}\sigma^{-t_{1}}(z)+\cdots+d_{k}\sigma^{-t_{k}}(z)$. By the condition, we have that for any $1\leq i\leq k$, the pole of $\sigma^{-t_{i}}(z)$ is $\sigma^{-t_{i}}(Q)$. Then $\sigma^{-t_{1}}(Q),\cdots,\sigma^{-t_{k}}(Q)$. Hence, $f$ is nonzero, and $\deg(f)=\sum^{k}_{i=1}m^{*}_{\sigma^{-t_{i}}(Q)}(\sigma^{-t_{i}}(z))d=km^{*}_{Q}(z)d$. Then we have
\begin{displaymath}
    \mathrm{Tr}(f(P_{i}))=d_{1}s_{i+t_{1}}+\cdots+d_{r}s_{i+t_{r}}.
    \end{displaymath}
By Lemma \ref{2}, the following inequality holds
\begin{displaymath}
    \left|\sum^{n-1}_{i=0}\omega_{p}^{d_{1}s_{i+t_{1}}+\cdots+d_{r}s_{i+t_{r}}}\right|\leq \left(2g(F)-2+\sum^{k}_{i=1}(m^{*}_{\sigma^{-t_{i}}(Q)}(\sigma^{-t_{i}}(z))+1)d\right)\sqrt{q}
    \end{displaymath}
Therefore, we have
\begin{displaymath}
\begin{split}
    &\sum_{(d_{1},\cdots,d_{r})\neq(0,\cdots,0)}\left|\sum^{n-1}_{i=0}\omega_{p}^{d_{1}s_{i+t_{1}}+\cdots+d_{r}s_{i+t_{r}}}\right|\\
    &\leq \sum^{r}_{k=1}\binom{r}{k}(p-1)^{k}\left(2g(F)-2+\sum^{k}_{i=1}(m^{*}_{\sigma^{-t_{i}}(Q)}(\sigma^{-t_{i}}(z))+1)d\right)\sqrt{q}\\
    &=(2g(F)-2)\sqrt{q}\sum_{k=1}^{r}\binom{r}{k}(p-1)^{k}+d\sqrt{q}(m^{*}_{Q}(z)+1)\sum_{k=1}^{r}\binom{r}{k}k(p-1)^{k}\\
    &=(2g(F)-2)\sqrt{q}p^{r}+dr\sqrt{q}(m^{*}_{Q}(z)+1)(p-1)p^{r-1}.
    \end{split}
    \end{displaymath}
Consider $f_{(\overline{z},\overline{t})}=N_{(\overline{z},\overline{t})}/n$; we have
\begin{displaymath}
    \left|f_{(\overline{z},\overline{t})}-\frac{1}{p^{r}}\right|\leq 
\frac{(2g(F)-2)\sqrt{q}}{n}+\frac{dr\sqrt{q}(m^{*}_{Q}(z)+1)(p-1)}{pn}.
    \end{displaymath}
For fixed $r,d,p,g(F)$, as $n\to\infty$ we have $f_{(\overline{z},\overline{t})}\to\frac{1}{p^{r}}$. In other words, the distributions of $r$-patterns in the sequences arising from any function fields become asymptotically uniform, thereby generalizing the result of Hu \textit{et al.} in \cite{22}.
    \end{proof}
    \end{theorem}

Nonlinear complexity is likewise a significant measure for sequences over function fields, and related results can be found in \cite{7,20,21}. 
\begin{theorem}
Let $\mathbb{F}_{q}$ be a finite field with $q = p^{h}$. Take an element $z \in F$ such that $v_{P_{i}}(z) \geq 0$. Assume that $z$ has $Q$ as its only pole, that $\gcd\bigl(v_{Q}(z),p\bigr) = 1$, and that the places $Q,\sigma(Q),\ldots,\sigma^{n-1}(Q)$ are mutually distinct. Then the $m$-th order nonlinear complexity of the sequence family $\mathbf{s}_{z}$ is given by
\begin{displaymath}
       \mathrm{NL}_{m}(\mathbf{s}_{z})\geq\frac{n-p^{h-1}m^{*}_{Q}(z)d}{1+mp^{h-1}m^{*}_{Q}(z)d}.
       \end{displaymath}
       \begin{proof}
   Assume that the $m$th-order nonlinear complexity of the family $\mathbf{s}_{z}$ is $\mathrm{L}$. Let $\Phi(x_{1},\ldots,x_{\mathrm{L}})=\sum_{\mathrm{wt}(I)\leq m}a_{I}x_{1}^{e_{1}}\cdots x_{\mathrm{L}}^{e_{\mathrm{L}}}$ be a nonzero multivariate polynomial in $\mathbb{F}_{q}[x_{1},\ldots,x_{\mathrm{L}}]$ with total degree at most $m$, satisfying $s_{j+\mathrm{L}}=\Phi(s_{j},\ldots,s_{j+\mathrm{L}-1})$ for every $0\leq j\leq n-1-\mathrm{L}$. Now consider the function
    \begin{displaymath}
        f:=\mathrm{Tr}(\tau^{-\mathrm{L}}(z))-\Phi(\mathrm{Tr}(z),\mathrm{Tr}(\tau^{-1}(z)),\ldots,\mathrm{Tr}(\tau^{-\mathrm{L}+1}(z))).
    \end{displaymath}
Observe that $\mathrm{Tr}(z)=z+z^{p}+\cdots+z^{p^{h-1}}$, and that $Q$ is the unique pole of $z$. By the non-Archimedean property of valuations, the pole divisor of $\mathrm{Tr}(z)$ is thus $(\mathrm{Tr}(z))_{\infty}=p^{h-1}v_{Q}(z)Q$.

  Moreover, since for each $0\leq i\leq \mathrm{L}$ the point $\sigma^{-i}(Q)$ is the only pole of $\mathrm{Tr}(\sigma^{-i}(z))$, we conclude that $\sigma^{-\mathrm{L}}(Q)$ is the sole pole of $f$. We now compare the degrees of the zero divisor and the pole divisor of $f$. Noting that $\sigma^{j}(P_{u})$ is not a pole of $\mathrm{Tr}(\sigma^{-i}(z))$ for any $0\leq j\leq n-1-\mathrm{L}$ and $0\leq i\leq\mathrm{L}$, we derive
   \begin{small}
    \begin{displaymath}
    \begin{split}
        f(\tau^{j}(P_{u}))&=\mathrm{Tr}(\sigma^{-\mathrm{L}}(z))(\tau^{j}(P_{u}))-\Phi(\mathrm{Tr}(z)(\sigma^{j}(P_{u})),\mathrm{Tr}(\tau^{-1}(z))(\sigma^{j}(P_{u})),\cdots,\mathrm{Tr}(\sigma^{-\mathrm{L}+1}(z))(\sigma^{j}(P_{u})))\\
        &=\mathrm{Tr}(z)(\sigma^{j+\mathrm{L}}(P_{u}))-\Phi(\mathrm{Tr}(z)(\sigma^{j}(P_{u})),\mathrm{Tr}(z)(\sigma^{j+1}(P_{u})),\cdots,\mathrm{Tr}(z)(\sigma^{j+\mathrm{L}-1}(P_{u})))\\
        &=s_{j+\mathrm{L}}-\Psi(s_{j},\cdots,s_{j+\mathrm{L}-1})=0\\
        \end{split}
    \end{displaymath}
    \end{small}
for all $0\leq j\leq n-1-\mathrm{L}$ and each integer $u\geq 1$. Consequently, the degree of $(f)_{0}$ is at least $n-\mathrm{L}$. In addition, by Lemma \ref{1}, the pole divisor of $f$ is bounded above by
\[p^{h-1}m^{*}_{Q}(z)\bigl(\tau^{-\mathrm{L}}(Q)+\sum^{\mathrm{L}-1}_{i=0}m\tau^{-i}(Q)\bigr),\]
which can be expressed as
\[
\deg(f)_{\infty}\leq p^{h-1}\cdot m^{*}_{Q}(z)\cdot d\cdot(1+\mathrm{L}m).
\]
Combining these bounds yields $n-\mathrm{L}\leq p^{h-1}\cdot m^{*}_{Q}(z)\cdot d\cdot(1+\mathrm{L}m)$, and thus
\begin{displaymath}
    \mathrm{NL}_{m}(\mathbf{s}_{z})\geq\frac{n-p^{h-1}m^{*}_{Q}(z)d}{1+mp^{h-1}m^{*}_{Q}(z)d}.
\end{displaymath}
Then we have the desired result.
\end{proof}
\label{7}
\end{theorem}

\section{Example construction over the rational function fields}
	\label{sec:4}
In this section, we apply the previous results to sequences defined over rational function fields. We begin by fixing the following notation:
\begin{enumerate}
    \item $\epsilon$ is a fixed primitive element of $\mathbb{F}_{q}$;
    \item $F=\mathbb{F}_{q}(x)$ denotes the rational function field;
    \item $\phi$ is the automorphism of $F/\mathbb{F}_{q}$ given by $x\mapsto \epsilon x$;
    \item $P$ is the unique zero of $x-1$.
\end{enumerate}
For each $i\in\mathbb{Z}$, let $P_{i}$ be the unique zero of
\[
\phi^{i}(x-1)=\epsilon^{i}x-1=\epsilon^{i}(x-\epsilon^{-i}),
\]
and set $n=q-1$. Then, by the cyclic structure of the finite field, the places $P_{i},P_{i+1},\ldots,P_{i+n-1}$ are mutually distinct, and we have $P_{j}=P_{j+n}$ for all $j\in\mathbb{Z}$. Let $\mathbb{P}_{d,F}$ denote the set of all places of degree $d\ge 2$. In the rational function field, there is a bijection between $\mathbb{P}_{d,F}$ and the set of monic irreducible polynomials of degree $d$ in $\mathbb{F}_{q}[x]$. The number of such places (or equivalently, irreducible polynomials) is
\begin{displaymath}
    I_{q}(d)=\frac{1}{d}\sum_{b\mid d}\mu\!\left(\frac{d}{b}\right)q^{b},
\end{displaymath}
where $\mu(\cdot)$ denotes the M$\ddot{o}$bius function.

The next lemma will be needed later; a proof is given in \cite{5}.
\begin{lemma}
    Let $Q\in\mathbb{P}_{d,F}$. If $d\geq 2$ and $\gcd(d,q-1)=1$, then for any fixed $i\in\mathbb{Z}$, the places $\phi^{i}(Q),\phi^{i+1}(Q),\ldots,\phi^{i+n-1}(Q)$ are mutually distinct.
\end{lemma}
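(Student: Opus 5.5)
We identify the action of $\phi$ on places of degree $d$. A place $Q\in\mathbb{P}_{d,F}$ corresponds to a monic irreducible polynomial $h(x)=x^{d}+a_{d-1}x^{d-1}+\cdots+a_{0}\in\mathbb{F}_{q}[x]$, namely $Q$ is the unique zero of $h$. Since $\phi(x)=\epsilon x$, the place $\phi^{j}(Q)$ is the unique zero of $\phi^{j}(h)=h(\epsilon^{j}x)$. This polynomial is irreducible of degree $d$ with leading coefficient $\epsilon^{jd}$, so $\phi^{j}(Q)$ corresponds to the monic irreducible polynomial
\[
h_{j}(x)=\epsilon^{-jd}h(\epsilon^{j}x)=x^{d}+a_{d-1}\epsilon^{-j}x^{d-1}+\cdots+a_{0}\epsilon^{-jd}.
\]
It therefore suffices to show that $h_{i},h_{i+1},\ldots,h_{i+n-1}$ are pairwise distinct. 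Since $\phi^{i}$ is a bijection on places, we may reduce to $i=0$ by replacing $Q$ with $\phi^{i}(Q)$.

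Suppose, for contradiction, that $h_{j}=h_{k}$ with $0\le j<k\le n-1$. Put $s=k-j$, so $0<s<q-1$ and $\epsilon^{s}\neq 1$. Comparing coefficients gives $a_{\ell}\epsilon^{-s(d-\ell)}=a_{\ell}$ for every $\ell$. Equivalently, $h(\epsilon^{s}x)=\epsilon^{sd}h(x)$. Let $\beta\in\mathbb{F}_{q^{d}}$ be a root of $h$. Then $\epsilon^{-s}\beta$ is also a root of $h$, so multiplication by $\eta:=\epsilon^{-s}$ permutes the $d$ roots of $h$.

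These roots are $\beta,\beta^{q},\ldots,\beta^{q^{d-1}}$, and they are distinct because $d\ge 2$ and $h$ is irreducible. Since $h$ is irreducible of degree $\ge 2$, we have $\beta\neq 0$. Let $r$ be the multiplicative order of $\eta$. The orbits of the map $y\mapsto\eta y$ on the nonzero roots all have size $r$, so $r\mid d$. We also have $r\mid q-1$. The hypothesis $\gcd(d,q-1)=1$ then forces $r=1$, that is, $\epsilon^{s}=1$. This contradicts $0<s<q-1$ and the primitivity of $\epsilon$.

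The main point requiring care is the orbit argument. If one prefers a more concrete route, use the coefficients directly: some nonzero coefficient $a_{\ell}$ with $\ell<d$ exists, since otherwise $h=x^{d}$, which is not irreducible for $d\ge 2$. For any such $\ell$, the relation $\epsilon^{s(d-\ell)}=1$ gives $(q-1)\mid s(d-\ell)$. Taking the gcd over all such $\ell$ gives $(q-1)\mid s\cdot g$, where $g=\gcd\{d-\ell: a_{\ell}\neq0\}$. Because $h(x)$ is then a polynomial in $x^{g}$ times a power of $x$, and $h$ is irreducible with $a_{0}\neq0$, it follows that $g\mid d$. Hence $\gcd(g,q-1)=1$, so $(q-1)\mid s$, which is again a contradiction. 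Either route completes the proof.
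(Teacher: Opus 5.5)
Your proof is correct. The paper does not prove this lemma itself; it only refers to \cite{5}. So there is no in-text argument to compare against, and your write-up is a complete, self-contained proof.

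Your identification of $\phi^{j}(Q)$ with the zero of $\epsilon^{-jd}h(\epsilon^{j}x)$ matches the paper's conventions, namely Lemma~\ref{1}(2) and the definition of $P_i$ as the zero of $\phi^{i}(x-1)$. Your root-orbit argument correctly shows that the order of $\epsilon^{-s}$ divides both $d$ and $q-1$, and therefore equals $1$.

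Two small remarks. First, the roots $\beta,\beta^{q},\ldots,\beta^{q^{d-1}}$ are distinct because irreducible polynomials over a finite field are separable, not because $d\ge 2$. The hypothesis $d\ge 2$ is used only to get $\beta\neq 0$, i.e.\ $a_0\neq 0$. This excludes the zero of $x$, which $\phi$ fixes.

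Second, your ``concrete route'' can be shortened a lot. Since $a_0\neq 0$, the constant coefficients alone give $a_0\epsilon^{-sd}=a_0$. Hence $(q-1)\mid sd$, and $\gcd(d,q-1)=1$ then gives $(q-1)\mid s$. The step through $g=\gcd\{d-\ell\}$ and the claim that $h$ is ``a polynomial in $x^{g}$ times a power of $x$'' are unnecessary. Even within that route, $g\mid d$ is immediate, because $d-0=d$ is one of the numbers whose gcd you take.

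This constant-term comparison is the most direct proof. Your orbit argument reaches the same conclusion by a more structural path. Both arguments show more than the lemma: even without the coprimality hypothesis, the stabilizer of $Q$ in $\langle\phi\rangle$ has order dividing $\gcd(d,q-1)$.
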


By this lemma, we conclude that there is an integer $d \geq 2$ for which the set $\mathbb{P}_{d,F}$ is divided into $r = \frac{I_q(d)}{n}$ equivalence classes, each containing $n$ distinct places of degree $d$. Consequently, we obtain $r$ distinct places of degree $d$, which we label $Q_{1}, Q_{2}, \ldots, Q_{r}$.

For any $1\leq i\neq j\leq r$, it follows that
\begin{displaymath}
    Q_{j}\not\in\{\phi^{s}(Q_{i})\mid s\in\mathbb{Z}\}=\{Q_{i},\phi(Q_{i}),\ldots,\phi^{n-1}(Q)\}.
\end{displaymath}
Now, for each $1\leq i\leq r$, take a rational function $z\in\mathcal{L}(Q_{i})\setminus\mathbb{F}_{q}$, and consider the family of $p$-ary sequences
\begin{displaymath}
    \mathbf{S}_{d}=\{\mathbf{s}_{z}\mid z\in\mathcal{L}(Q_{i})\setminus\mathbb{F}_{q},\ 1\leq i\leq I_{q}(d)/(q-1)\},
\end{displaymath}
where
\begin{displaymath}
    \mathbf{s}_{z}=\{\mathrm{Tr}(z(P_{j}))\}_{j=0}^{\infty}=\{\mathrm{Tr}(z(\epsilon^{j}))\}_{j=0}^{\infty}.
\end{displaymath}
Therefore, the family $\mathcal{S}_{d}$ has a size of $r(q^{d}-q)$.
\begin{theorem}
With the above notation, assume that $2\leq d\leq (q-2\sqrt{q}-1)/4\sqrt{q}$ and $\gcd(d,q+1+t)=1$. Let $\mathbf{S}_{d}$ be the family of $p$-ary sequences constructed as above. Then $\mathbf{S}_{d}$ has cardinality $I_{q}(d)/(q-1)$, and each sequence in $\mathbf{S}_{d}$ has period $n=q-1$. Moreover
\begin{displaymath}
\begin{split}
    &\mathrm{L}(\mathbf{s}_{z_{i}})\geq\frac{q-1-2(d-1)\sqrt{q}}{2d\sqrt{q}};\\
    & \left|C_{\mathbf{s}_{z_{i}},\mathbf{s}_{z_{j}}}(\tau)\right|\leq2(2d-1)\sqrt{q};
    \end{split}
\end{displaymath}
for all $z_{i}\neq z_{j}\in\mathcal{L}(Q_{i})\setminus\mathbb{F}_{q}$ for $1\leq i\leq r=I_{q}(d)/(q-1)$.
    \begin{proof}
       From the condition $d\leq (q-2\sqrt{q}-1)/4\sqrt{q}$, we obtain
\[
\left(2g-2+2(m^{*}_{Q}(z)+1)d\right)\sqrt{q}
=2(-1+2d)\sqrt{q}<n.
\]
Hence, by Proposition \ref{3} and the assumption $\gcd(d,q+1+t)=1$, every sequence in $\mathbf{S}_{d}$ has period $n=q+1+t$. Choose $\mathbf{s}_{z}\in\mathcal{L}(Q_{i})\setminus\mathbb{F}_{q}$ for some $1\leq i\leq r=I_{q}(d)/(q-1)$. Then $Q_{i}$ is the unique pole of $z$, and the points $Q_{i},\phi(Q_{i}),\ldots,\phi^{n-1}(Q_{i})$ are all distinct. Applying Theorem \ref{10}, we obtain
        \begin{displaymath}
\mathrm{L}(\mathbf{s}_{z})\geq\frac{q-1-2(d-1)\sqrt{q}}{2d\sqrt{q}}
            \end{displaymath}
      for all $z\in\mathcal{L}(Q_{i})\setminus\mathbb{F}_{q}$ and $1\leq i\leq r$.
Next, we examine the correlation. Let $\mathbf{s}_{z_{i}}$ and $\mathbf{s}_{z_{j}}$ be two sequences in $\mathbf{S}_{d}$ (with the possibility that $j=i$). For any $\tau\in\mathbb{Z}$, define
\[
    f = z_{i} + \phi^{-\tau}(z_{j}),
\]
where $Q_{i}$ is the unique pole of $z_{i}$ and $\phi^{-\tau}(Q_{j})$ is the unique pole of $\phi^{-\tau}(z_{j})$.

Case 1: $i\neq j$. In this case, $Q_{i}\neq \sigma^{-\tau}(Q_{j})$ because $Q_{j}\notin\{\phi^{s}(Q_{i}) \mid s\in\mathbb{Z}\}$. Hence $Q_{i}$ cannot be a pole of $\phi^{-\tau}(z_{j})$, and we obtain $v_{Q_{i}}(f)=-1$. Thus, $f$ is non-degenerate.

Case 2: $i=j$ and $0<\tau<n$. Then $Q_{i}\neq \phi(Q_{i})$, so the same reasoning as in Case 1 applies.

Consequently, the correlation satisfies
\[
\begin{split}
    \left|C_{\mathbf{s}_{z_{i}},\mathbf{s}_{z_{j}}}(\tau)\right|
    &\leq \left(2g-2+\sum_{k=i,j}\bigl(m^{*}_{Q_{k}}(z_{k})+1\bigr)d\right)\sqrt{q}\\
    &= 2(2d-1)\sqrt{q}.
\end{split}
\]
This yields the desired result.
            \end{proof}
            \label{99999}
\end{theorem}

\begin{theorem}
Notations as in Theorem \ref{99999}, then we have
     \begin{displaymath}
        |2\mathrm{L}_{t}(\mathbf{s}_{z})-t|\leq \frac{t-\left(-2+(t+2)d\right)\sqrt{q}}{d\sqrt{q}}\ \ \  \text{for any}\ 1\leq t\leq q-1.
    \end{displaymath}
    which implies that the sequence $\mathbf{s}_{z}$ is perfect if $d\geq \frac{t+2\sqrt{q}}{\sqrt{q}(t+3)}$ and therefore for any $d\geq 2$, the family of sequences $S=\{\mathbf{s}_{z}| z\in\mathcal{L}(Q)\setminus\{0\}\}$ is perfect.
    \label{11}
    \end{theorem}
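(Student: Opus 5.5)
The plan is to run the argument of Theorem \ref{10} on the truncated sequence $\pi_t(\mathbf{s}_z)=(s_0,\ldots,s_{t-1})$ instead of the full period, and then turn the lower bound on $\mathrm{L}_t(\mathbf{s}_z)$ into a two-sided deviation bound, following the Corollary after Theorem \ref{10}. Throughout, $F=\mathbb{F}_q(x)$, so $g(F)=0$, and $z\in\mathcal{L}(Q_i)\setminus\mathbb{F}_q$ has the single simple pole $Q_i$ of degree $d$. Hence $v_{Q_i}(z)=-1$ is prime to $p$, and $m^{*}_{Q_i}(z)=1$.

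\textbf{Step 1 (lower bound for the profile).} Let $\mathrm{L}=\mathrm{L}_t(\mathbf{s}_z)$ and let $\lambda_0=1,\ldots,\lambda_{\mathrm{L}}$ be a minimal recurrence valid for $0\le i\le t-1-\mathrm{L}$. Put $f=\sum_{k=0}^{\mathrm{L}}\lambda_k\phi^{-\mathrm{L}+k}(z)$. By the Lemma on distinctness of $\phi^{i}(Q)$ (here $\gcd(d,q-1)=1$ and $\mathrm{L}<t\le n$), the poles $\phi^{-k}(Q_i)$ of $f$ are distinct and simple. Lemma \ref{88} then makes $f$ non-degenerate, and $\mathrm{Tr}(f(P_j))=0$ for $0\le j\le t-1-\mathrm{L}$. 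So the exponential sum over these $t-\mathrm{L}$ places has absolute value $t-\mathrm{L}$. To use Theorem \ref{2}, which sums over all rational places outside the poles, I would split the full sum over $P^1(F)\setminus\mathrm{Supp}(f)_\infty$ into the $t-\mathrm{L}$ "good" places and the remaining $q+1-(t-\mathrm{L})$ places. The remaining sum is bounded by its length, or better, by the same Weil bound applied to a complementary sum.

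\textbf{Main obstacle.} This is the point where the argument is weakest. The bound in the statement effectively treats the partial sum as if it enjoyed the full-sum bound $(-2+2(\mathrm{L}+1)d)\sqrt{q}$. The first thing I would try is to take over the paper's convention from Theorem \ref{10}, so that $t-\mathrm{L}\le(-2+2(\mathrm{L}+1)d)\sqrt q$ and hence
\[
\mathrm{L}\ \ge\ \frac{t+2\sqrt q-2d\sqrt q}{2d\sqrt q+1}.
\]
If that is not acceptable, I would fall back on the zero-counting argument of the $p=q$ theorem. That argument needs no completion of the sum but only applies when $q=p$: it gives $t-\mathrm{L}\le(\mathrm{L}+1)d$.

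\textbf{Step 2 (two-sided bound).} The upper side $2\mathrm{L}_t-t\le$ RHS would come from the standard duality for profiles: if $\mathrm{L}_t>t/2$, then $\mathrm{L}_t=t-\mathrm{L}_{t'}+\text{const}$ for the last jump index $t'$ (the Berlekamp–Massey jump rule $\mathrm{L}_{t}=t-\mathrm{L}_{t-1}$). This reduces the upper deviation to the lower bound at a shorter length. I would then assemble the two inequalities exactly as in the Corollary with $g=0$ and $m^{*}=1$. Finally, "perfect" means checking $|\mathrm{L}_t-t/2|\le 1$ by rearranging the displayed inequality in $d$. This last step is pure algebra, and I would verify the threshold $d\ge\frac{t+2\sqrt q}{\sqrt q(t+3)}$ directly. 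Since $\frac{t+2\sqrt q}{\sqrt q(t+3)}<2$ for $q\ge 4$, the condition holds for every $d\ge2$.
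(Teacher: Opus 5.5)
\textbf{There is a genuine gap that cannot be closed, because the displayed inequality is false.} Its right-hand side equals $\frac{t}{d\sqrt q}+\frac{2}{d}-(t+2)$. This is negative for every $t\ge 1$, $d\ge 2$, $q\ge 2$, since $d\sqrt q>1$ and $2/d\le 1$. Yet already at $t=1$ the left-hand side is $|2\mathrm{L}_1-1|=1$.

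The paper gives no argument beyond specialising the Corollary after Theorem \ref{10} (with $g=0$, $m^{*}=1$, $n\to t$), and that Corollary contains a sign error. Rearranging Theorem \ref{10} gives only the one-sided bound $2\mathrm{L}-n\ge\frac{2n-(2(2g-2)+(n+2)(m^{*}+1)d)\sqrt q}{(m^{*}+1)d\sqrt q}$, which is then reported as an upper bound on $|2\mathrm{L}-n|$. Your Step 2 (``assemble the two inequalities exactly as in the Corollary'') imports exactly this error.

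Carried out honestly, and even granting your convention in Step 1, your argument gives $t-2\mathrm{L}_t\le t\bigl(1-\frac{2}{2d\sqrt q+1}\bigr)+2$. The jump rule, in the form $2\mathrm{L}_t-t\le\max_{m<t}(m-2\mathrm{L}_m)+1$, then yields only $|2\mathrm{L}_t-t|\le t\bigl(1-\frac{2}{2d\sqrt q+1}\bigr)+3$. That bound is of order $t$, and no choice of $d$ brings it down to $1$.

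\textbf{The perfectness conclusion also fails on its own.} Let $h$ be the monic irreducible polynomial of $Q_i$. For every $d\ge 2$, the function $z=(x-1)(x-\epsilon^{-1})/h(x)$ lies in $\mathcal{L}(Q_i)\setminus\mathbb{F}_q$ and vanishes at $P_0$ and $P_1$ (the zeros of $x-1$ and $\epsilon x-1$). Hence $s_0=s_1=0$, $\mathrm{L}_2=0$, and $|2\mathrm{L}_2-2|=2>1$. Note that perfect means $|2\mathrm{L}_t-t|\le 1$, not $|\mathrm{L}_t-t/2|\le 1$.

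\textbf{The obstacle you flag in Step 1 is also real.} Theorem \ref{2} bounds only complete sums over $P^{1}(F)\setminus\mathrm{Supp}((f)_{\infty})$. The $t-\mathrm{L}$ places $P_0,\dots,P_{t-1-\mathrm{L}}$ where the truncated recurrence holds do not form such a set, so adopting the full-sum bound there is not a valid step. Bounding the remaining $q+1-(t-\mathrm{L})$ terms trivially destroys the estimate.

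What can actually be proved is a zero-counting bound, valid for every $q=p^{h}$ if you apply it to $\mathrm{Tr}(f)=\sum_{j=0}^{h-1}f^{p^{j}}$. This function is nonzero, since $v_R(\mathrm{Tr}(f))=-p^{h-1}$ at each simple pole $R$ of $f$. Its pole divisor has degree at most $p^{h-1}(\mathrm{L}+1)d$, and it vanishes at those $t-\mathrm{L}$ rational places. This gives $\mathrm{L}_t\ge\frac{t-p^{h-1}d}{p^{h-1}d+1}$. That one-sided linear lower bound on the profile is the correct replacement; the two-sided inequality and the perfectness claim cannot be proved because both are false.
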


\begin{remark}
Now, we consider the finite field $\mathbb{F}_{q}$ with $q=2^{n}$ for some positive integer $n$. Then we have a family of binary sequences. Denote by $L_{1}$ and $C_{1}$ the bounds of the linear complexity and correlation in Theorem \ref{11} and denote by the bound for linear complexity and correlation of the binary sequences in \cite{3} by
   \begin{displaymath}
\begin{split}
    &\mathrm{L}(\mathbf{s}_{z_{i}})\geq\frac{q-3-2(d-1)\sqrt{q}}{2d\sqrt{q}}:=L'_{1};\\
    & \left|C_{\mathbf{s}_{z_{i}},\mathbf{s}_{z_{j}}}(\tau)\right|\leq2(2d-1)\sqrt{q}+6:=C'_{1};
    \end{split}
\end{displaymath}
for all $z_{i}\neq z_{j}\in\mathcal{L}(Q_{i})\setminus\mathbb{F}_{q}$ for $1\leq i\leq r=I_{q}(d)/(q-1)$. Then we have
\begin{displaymath}
\begin{split}
    L_{1}-L'_{1}&=\frac{q-1-2(d-1)\sqrt{q}}{2d\sqrt{q}}-\frac{q-3-2(d-1)\sqrt{q}}{2d\sqrt{q}}\\
    &=\frac{2}{2d\sqrt{q}}>0\\
    C_{1}-C'_{1}&=2(2d-1)\sqrt{q}-[2(2d-1)\sqrt{q}+6]=-6<0
    \end{split}
\end{displaymath}
Then we have $L_{1}>L'_{1}$ and $C_{1}<C'_{1}$.
\end{remark}
    
    \section{Example construction over the cyclic elliptic function fields}
    \label{sec:5}
In this section, we apply the previous results to sequences constructed over cyclic elliptic function fields.

Let $E/\mathbb{F}_{q}$ be a cyclic elliptic function field with order $N=q+1+t$ for some integer $t$ satisfying $|t|\leq 2\sqrt{q}$, and let $O$ be defined as in the Preliminaries. For any integer $d\geq 2$ with $\gcd(d,q+1+t)=1$, denote by $\mathbb{P}_{E,d}$ the set of all places of degree $d\geq 2$ of $E/\mathbb{F}_{q}$. The set $\mathbb{P}_{E,d}$ decomposes into $r=B_{d}/N$ orbits under the action of the translation group $T_{E}$. Hence, we obtain $r$ distinct places of degree $d$,
\[
    Q_{1},Q_{2},\dots,Q_{r}.
\]
It is immediate that for $1\leq i\neq j\leq r$ and for any rational place $P$,
\[
    Q_{j}\notin\{\sigma^{s}(Q_{i})\mid s\in\mathbb{Z}\}=\{Q_{i},\sigma_{P}(Q_{i}),\dots,\sigma_{P}^{N-1}(Q_{i})\}.
\]

Now take a rational function $z\in E\setminus\{0\}$ such that $v_{P_{i}}(z)\geq 0$, where $Q$ is the unique pole of $z$, and assume $\gcd(v_{Q}(z),p)=1$. As before, we can define the sequence $\mathbf{s}_{z}$ and consider the sequence family
\[
    \tilde{\mathbf{S}}_{d}=\{\mathbf{s}_{z}\mid z\in E\setminus\mathbb{F}_{q},\ \gcd(v_{Q}(z),p)=1,\ 1\leq i\leq r=B_{d}/(q+1+t)\},
\]
where
\[
    \mathbf{s}_{z}=\{\operatorname{Tr}(z(P_{j}))\}_{j=0}^{\infty}=\{\operatorname{Tr}(z([j]P))\}_{j=0}^{\infty}.
\]

Let $z_{i}$ and $z_{j}$ be rational functions for which $Q_{i}$ (respectively $Q_{j}$) is the unique pole of $z_{i}$ (respectively $z_{j}$), and such that $\gcd(v_{Q_{i}}(z_{i}),p)=1$ (respectively $\gcd(v_{Q_{j}}(z_{j}),p)=1$). Under these assumptions, we obtain the following theorems.
\begin{example}
 Let $t$ be an integer satisfying one condition in Lemma \ref{222} .If
    \begin{displaymath}
        N>2(m^{*}_{Q}(z)+1)d\sqrt{q}.
        \end{displaymath}
    In this case, the period of $\mathbf{s}_{z}$ is $N=q+1+t$.
    
    In particular, if $m^{*}_{Q}(z)=1$, this coincides with the result of Hu \textit{et al}. in \cite{22}. That is, if
    \[
    N>4d\sqrt{q},
    \]
    then the period of $\mathbf{s}_{z}$ is $N=q+1+t$.
    \label{4}
    \end{example}

\begin{theorem}
Let $t$ be an integer satisfying one condition in Lemma \ref{222} and $2\leq d\leq (q+1+t)/2(m^{*}_{Q_{i}}(z)+1)\sqrt{q}-1$, $\gcd(d,q+1+t)=1$, and $\gcd(v_{Q_{i}}(z),p)=1$ for $1\leq i\leq r=B_{d}/(q+1+t)$. Let $\tilde{\mathbf{S}}_{d}$ be the family of $p-$ary sequences as constructed above. Then for any $\mathbf{s}_{z_i}, \mathbf{s}_{z_j}\in\tilde{\mathbf{S}}_{d}$, we have
$$
\mathrm{L}(\mathbf{s}_{z_{i}})\geq\frac{q+1+t-(m^{*}_{Q_{i}}(z_{i})+1)d\sqrt{q}}{(m^{*}_{Q_{i}}(z_{i})+1)d\sqrt{q}}
$$
and
$$
\left|C_{\mathbf{s}_{z_{i}},\mathbf{s}_{z_{j}}}(\tau)\right|\leq\sum_{k=i,j}(m^{*}_{Q_{k}}(z_{k})+1)d_{k}\sqrt{q}.
$$
    \begin{proof}
       From the assumption $2\leq d\leq (q+1+t)/\bigl(2(m^{*}_{Q_{i}}(z)+1)\sqrt{q}-1\bigr)$, we obtain the inequality $2(m^{*}_{Q}(z)+1)d\sqrt{q}<q+1+t$. Hence, by Proposition \ref{3}, every sequence in $\mathbf{S}_{d}$ has period $n=q+1+t$. Now take $\mathbf{s}_{z_{i}}\in\tilde{\mathbf{S}}_{d}$ for some $1\leq i\leq r=B_{d}/(q+1+t)$. According to the condition on $Q_{i}$, this place is the sole pole of $z_{i}$, and the points $Q_{i},\sigma_{P}(Q_{i}),\ldots,\sigma_{P}^{N-1}(Q_{i})$ are all distinct from one another. Applying Theorem \ref{10}, we conclude that
        \begin{displaymath}
\mathrm{L}(\mathbf{s}_{z_{i}})\geq\frac{q+1+t-(m^{*}_{Q_{i}}(z_{i})+1)d\sqrt{q}}{(m^{*}_{Q_{i}}(z_{i})+1)d\sqrt{q}}
            \end{displaymath}
        for any  $z_{i}\in E$ satisfying $Q_{i}$, there is a unique pole of $z_{i}$ and $\gcd(v_{Q_{i}}(z_{i}),p)=1$.
        
           Now consider the correlation. Let $\mathbf{s}_{z_{i}}$ and $\mathbf{s}_{z_{j}}$ be two sequences of $\mathbf{S}_{d}$ (it allows for $i=j$). For $\tau\in\mathbb{Z}$, consider the function 
        \begin{displaymath}
            f=z_{i}+\sigma_{P}^{-\tau}(z_{j})
        \end{displaymath}
        $Q_{i}$ is the unique pole of $z_{i}$, and $\sigma_{P}^{-\tau}(Q_{j})$ is the unique pole of $\sigma_{P}^{-\tau}(z_{j})$.

Case 1: $i\neq j$. then $Q_{i}\neq\sigma^{-\tau}(Q_{j})$ since $Q_{j}\notin\{\sigma^{s}(Q_{i})| s\in\mathbb{Z}\}$. Therefore, $Q_{i}$ is not the pole of $\sigma^{-\tau}(z_{j})$ and $v_{Q_{i}}(f)=-e_{i}$ with $\gcd(e_{i},p)=1$; $f$ is non-degenerate.

Case 2: $i= j$. For any $0<\tau<q+1+t$, we have $Q_{i}\neq \sigma^{-\tau}(Q_{i})$ and the same discussion as in case~1.

Then the correlation can be expressed as
\begin{displaymath}
\begin{split}
     \left|C_{\mathbf{s}_{z_{i}},\mathbf{s}_{z_{j}}}(\tau)\right|&\leq   
 \left(2g(E)-2+\sum_{k=i,j}(m^{*}_{Q_{k}}(z_{k})+1)d_{k}\right)\sqrt{q}\\
 &=\sum_{k=i,j}(m^{*}_{Q_{k}}(z_{k})+1)d_{k}\sqrt{q}.
 \end{split}
 \end{displaymath}
            \end{proof}
            \label{999999}
\end{theorem}

\begin{theorem}
    Notations as in Theorem \ref{999999}, we have
    \begin{displaymath}
        |2\mathrm{L}_{s}(\mathbf{s}_{z})-s|\leq \frac{s-  (s+2)d\sqrt{q}}{d\sqrt{q}}\ \ \ \text{for any}\ 1\leq s\leq q+1+t        \end{displaymath}
    which implies the sequence  $\mathbf{s}_{z}$ is perfect if $d\geq \frac{s}{\sqrt{q}(s+3)}$ and it also follows that for any $d\geq 2$, the family of sequences $S=\{\mathbf{s}_{z}| z\in\mathcal{L}(Q)\setminus\{0\}\}$ is perfect.
    \end{theorem}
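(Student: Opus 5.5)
The plan is to mirror the argument behind Theorem~\ref{11} and the Corollary following Theorem~\ref{10}, now with the elliptic data $g(E)=1$ and $2g(E)-2=0$. Fix $1\le s\le q+1+t$ and let $\mathrm{L}_{s}=\mathrm{L}_{s}(\mathbf{s}_{z})$ be the linear complexity of the truncation $(s_0,\dots,s_{s-1})$. I would get a lower bound on $\mathrm{L}_s$ by repeating the proof of Theorem~\ref{10} on the first $s$ places only.

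First, take the minimal recurrence $\lambda_0=1,\dots,\lambda_{\mathrm{L}_s}$ valid for $0\le i\le s-1-\mathrm{L}_s$. Form
\[
f=\sum_{k=0}^{\mathrm{L}_s}\lambda_k\sigma_P^{-(\mathrm{L}_s-k)}(z).
\]
By the translation-orbit lemma for cyclic elliptic function fields (valid since $\gcd(d,q+1+t)=1$), the poles $\sigma_P^{-k}(Q)$ are distinct. Hence $f$ has exactly $\mathrm{L}_s+1$ poles, each of degree $d$ and of order prime to $p$, so $f$ is non-degenerate by Lemma~\ref{88}.

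Second, $\mathrm{Tr}(f(P_i))=0$ for the $s-\mathrm{L}_s$ indices $0\le i\le s-1-\mathrm{L}_s$. Theorem~\ref{2} bounds a sum over all rational places, not over a partial orbit, so I would split
\[
\sum_{P'\in A}\omega_p^{\mathrm{Tr}(f(P'))}=(s-\mathrm{L}_s)+\sum_{\text{remaining}}\omega_p^{\mathrm{Tr}(f(P'))}
\]
and bound the remaining terms trivially by their number. Combined with Theorem~\ref{2}, this gives
\[
s-\mathrm{L}_s\lesssim (\mathrm{L}_s+1)(m^*_Q(z)+1)d\sqrt q+\bigl(N-(s-\mathrm{L}_s)\bigr).
\]
The alternative is the zero-counting argument from the $p=q$ theorem. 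That gives $s-\mathrm{L}_s\le (\mathrm{L}_s+1)m^*_Q(z)d$, which is the cleaner route and the one I would actually use to reach a bound of the form $\mathrm{L}_s\ge (s-c)/(c+1)$ with $c\approx d\sqrt q$ (taking $m^*_Q(z)=1$).

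Third, $2\mathrm{L}_s-s$ has to be bounded from both sides. The lower side comes from the bound above. For the upper side I would use the standard fact that a jump in the profile beyond $s/2$ forces $\mathrm{L}_{s'}=s'-\mathrm{L}_{s}$ at the next change point $s'$. Applying the lower bound at $s'$ therefore bounds $\mathrm{L}_s$ from above. Putting both sides together and simplifying with $2g(E)-2=0$ should yield the stated expression $\bigl(s-(s+2)d\sqrt q\bigr)/(d\sqrt q)$. Perfectness would then be read off by requiring the right side to be at most $1$, which rearranges to the stated threshold on $d$. For the family statement I would check that this threshold is below $2$ for all $s$.

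The main obstacle is the partial-sum issue in the second step. Theorem~\ref{2} does not directly control sums over an initial segment of the orbit, so obtaining an honest dependence on $s$ requires either the zero-counting route or a completion/Fourier argument over $\mathbb{Z}/N$, which costs a $\log N$ factor. I would try zero-counting first. I also expect that matching the exact constants in the displayed bound, and its sign conventions, will need some care, since the analogous formula in Theorem~\ref{11} appears to be derived formally from the Corollary.
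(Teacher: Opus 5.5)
\paragraph{The target inequality cannot hold.} There is a genuine gap, and it lies in the inequality itself rather than in a constant you could adjust: its right-hand side is negative for every admissible choice. Indeed
\[
\frac{s-(s+2)d\sqrt q}{d\sqrt q}=\frac{s}{d\sqrt q}-s-2<-2,
\]
since $d\ge 2$ and $q\ge 2$ give $d\sqrt q>1$. So the statement asks for $|2\mathrm{L}_s(\mathbf{s}_z)-s|<-2$, which no argument can deliver. Your last step (``putting both sides together \ldots should yield the stated expression'') must fail, and the ``sign conventions'' you flagged are exactly where.

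The paper gives no separate proof. The display is the Corollary after Theorem~\ref{10} with $g(E)=1$, $m^*_Q(z)=1$ and $n$ renamed $s$. That Corollary comes from rewriting the lower bound $\mathrm{L}\ge \frac{n}{2d\sqrt q}-1$ as
\[
2\mathrm{L}-n\ \ge\ \frac{n-(n+2)d\sqrt q}{d\sqrt q}.
\]
This is a lower bound on the signed quantity $2\mathrm{L}-n$, which the paper then reads as an upper bound on $|2\mathrm{L}-n|$. Mirroring that derivation reproduces the non sequitur.

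The perfectness conclusions are false independently of the bound. A nonzero constant in $\mathcal{L}(Q)$ yields a constant sequence with $\mathrm{L}_n\le 1$. More generally, an $N$-periodic sequence has $\mathrm{L}_n\le N$ for all $n$, so $|2\mathrm{L}_n-n|\ge n-2N$ is unbounded and no such sequence is perfect in the sense of the definition. Your planned check that $\frac{s}{\sqrt q(s+3)}<2$ therefore proves nothing.

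\paragraph{Your tools do not reach even a corrected version.} Even a one-sided statement is not obtained uniformly in $s$ by the tools you list.
\begin{itemize}
\item Zero counting is valid only for $q=p$, because for $q>p$ the relation $\mathrm{Tr}(f(P_i))=0$ does not force $f(P_i)=0$. Where it applies, it gives $s-\mathrm{L}_s\le(\mathrm{L}_s+1)c$ with $c=-v_Q(z)\,d$. This is the true pole order, not $m^*_Q(z)$, and there is no factor $\sqrt q$. So $\mathrm{L}_s\ge (s-c)/(c+1)$, and for $c\ge 2$ this leaves $|2\mathrm{L}_s-s|$ of order $s$, even combined with the Berlekamp--Massey jump relation.
\item Using Theorem~\ref{2} with the incomplete terms bounded trivially gives only $2(s-\mathrm{L}_s)-N\le 2(\mathrm{L}_s+1)d\sqrt q$, which is empty for $s\le N/2$.
\item A genuine completion over $\mathbb{Z}/N$ would need bounds for mixed sums $\sum\omega_p^{\mathrm{Tr}(f(P))}\chi(P)$, with $\chi$ a character of $E(\mathbb{F}_q)$. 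Theorem~\ref{2} as stated does not provide these.
\end{itemize}
What is honestly available is the one-sided estimate $\mathrm{L}_s\ge\frac{s}{2d\sqrt q}-1$, at $s=N$ by Theorem~\ref{10}. Equivalently, $s-2\mathrm{L}_s\le s\bigl(1-\frac{1}{d\sqrt q}\bigr)+2$. This is a bound of order $s$, not a two-sided bound and not perfectness.
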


\begin{remark}
    Also consider the finite field $\mathbb{F}_{q}$ with $q=2^{n}$ for some positive integer $n$. Consider $z_{i}\in\mathcal{L}(Q_{i})$ for $1\leq i\leq B_{d}/(q+1+t)$, then we denote by $L_{2}$ and $C_{2}$ the bounds of the linear complexity and correlation derived from Theorem \ref{999999}, \textit{i.e.}
    \begin{displaymath}
        \begin{split}
            L_{2}&=\frac{q+1+t-2d\sqrt{q}}{2d\sqrt{q}}\\
            C_{2}&=4d\sqrt{q}
        \end{split}
    \end{displaymath}
    Denote the bound for linear complexity and correlation of the binary sequences in \cite{3} by
    \begin{displaymath}
        \begin{split}
            \mathrm{L}(\mathbf{s}_{z_{i}})&\geq \frac{q+1+2t-2(d+1)\sqrt{q}}{2d\sqrt{q}}:=L'_{2}\\
            C(\mathbf{s}_{z_{i}})&\leq 2(2d+1)\sqrt{q}+|t|:=C'_{2}
        \end{split}
    \end{displaymath}
Then we have 
\begin{displaymath}
\begin{split}
    L_{2}-L'_{2}&=\frac{q+1+t-2d\sqrt{q}}{2d\sqrt{q}}-\frac{q+1+2t-2(d+1)\sqrt{q}}{2d\sqrt{q}}\\
    &=\frac{-t+2\sqrt{q}}{2d\sqrt{q}}\\
    C_{2}-C'_{2}&=4d\sqrt{q}-2(2d+1)\sqrt{q}-|t|\\
    &=-2\sqrt{q}-|t|<0
    \end{split}
    \end{displaymath}
and the equality $L_{2}=L'_{2}$ holds if and only if $n$ is even and
\begin{displaymath}
        t=2\sqrt{q}
\end{displaymath}
which means $q+1+t=(2^{n/2}+1)^{2}$ and such an elliptic curve does not exist according to Theorem 3 in \cite{8}. Then we have $L_{1}>L'_{1}$ and $C_{2}<C'_{2}$.
\end{remark}

    \section{Conclusion}
    \label{sec:7}
    In this paper, we generalize the construction of $p-$ary sequences from elliptic function fields of Hu \textit{et al.} in~\cite{22} to the constructions by using exponential sums over general algebraic function fields and results given by Deligne~\cite{32}, Weil~\cite{33}, and Bombieri~\cite{33}. Our results show that these sequences have large periods, large linear complexity, low correlation, and  high nonlinear complexity, which extend the results in~\cite{20,5} to the $p-$ary sequences from general algebraic function fields. 
	
        \end{document}